\definecolor{darkblue}{rgb}{0,0,.5}
\theoremstyle{plain}
\newtheorem{theorem}{Theorem}[section]
\newtheorem{lemma}[theorem]{Lemma}
\newtheorem{proposition}[theorem]{Proposition}
\newtheorem{corollary}[theorem]{Corollary}
\theoremstyle{definition}
\newtheorem{definition}[theorem]{Definition}
\newtheorem{remark}[theorem]{Remark}
\def\d{\textup{div}}
\def\R{\mathbb{R}}
\def\F{\mathcal{F}}
\providecommand{\bysame}{\makebox[3em]{\hrulefill}\thinspace}
\newcommand{\up}{\upshape}
\newcommand{\longto}{\longrightarrow}
\def\vv<#1>{\langle#1\rangle}
\def\ww<#1>{\langle\langle#1\rangle\rangle}
\newcommand{\tr}{\mbox{$\textup{Tr}$}}
\newcommand{\ev}{\mbox{$\text{\up{ev}}$}}
\newcommand{\dd}[2]{\mbox{$\frac{\partial #2}{\partial #1}$}}
\newcommand{\ddo}{\mbox{$\dd{t}{}|_{0}$}}
\providecommand{\del}{\partial}
\newcommand{\om}{\omega}
\newcommand{\Om}{\Omega}
\newcommand{\wt}[1]{\mbox{$\widetilde{#1}$}}
\newcommand{\by}[2]{\mbox{$\frac{#1}{#2}$}}
\newcommand{\cinf}{\mbox{$C^{\infty}$}}
\providecommand{\set}[1]{\mbox{$\{#1\}$}}
\newcommand{\gu}{\mathfrak{g}}
\newcommand{\Ad}{\mbox{$\text{\upshape{Ad}}$}}
\newcommand{\ad}{\mbox{$\text{\upshape{ad}}$}}
\newcommand{\X}{\mbox{$\mathcal{X}$}}
\newcommand{\revise}[1]{#1}%{{\color{red} #1}}
\newcommand{\retwo}[1]{#1}%{{\color{blue} #1}}
\title[A Hamiltonian Mean-Field System for the Navier-Stokes Equation]{%
%\texttt{Private notes -- preliminary status:}\\
%\vspace{1cm}
A Hamiltonian Mean-Field System for the Navier-Stokes Equation}
\author{Simon Hochgerner}
\address{Austrian Financial Market Authority (FMA),
Otto-Wagner Platz 5, A-1090 Vienna
}
\email{simon.hochgerner@gmail.com} 
\date{September 2, 2018}
\begin{document}

\maketitle 

\begin{abstract}
We use a Hamiltonian interacting particle system to derive a stochastic mean field system whose McKean-Vlasov equation yields the incompressible Navier Stokes equation. 
Since the system is Hamiltonian, the particle relabeling symmetry implies a Kelvin Circulation Theorem along stochastic Lagrangian paths. Moreover, issues of energy dissipation are discussed and the model is connected to other approaches in the literature. 
\end{abstract}

\section*{Introduction} 
Stochastic fluid dynamics can be discussed from different perspectives:
\begin{enumerate}[\up (1)]
    \item
    Multi-scale approach, stochastic dynamics for modelling fluid flow under uncertainties:
    The idea is to separate the dynamics into a slow (deterministic) and a fast (stochastic) component. The result is a stochastic system and the goal is generally to study the corresponding S(P)DE as a realistic model of fluid motion. Representatives of this approach are \cite{CGD17,CFH17,Holm15,Mem14,RMC17,RMC17a,RMC17b}.
    \item
    Stochastic approaches to deterministic fluid mechanics:
    Again one uses a stochastic perturbation to capture fine-scale effects. But, in contrast to (1), the goal is to average over the stochastic system to gain information about (or, a solution of) the resulting deterministic model. 
    %For instance, this deterministic model could be the Navier-Stokes equation.
    Representatives of this approach are \cite{CC07,CI05,C,CS09,E10,H17,I06b,IM08,IN11,NS17,Y83}.  
\end{enumerate}
The present paper belongs to the second category. 
We are concerned with 
the incompressible Navier-Stokes equation
on the $n$-dimensional torus $M=T^n = \R^n/\mathbb{Z}^n$:
\begin{align*}
    \dd{t}{}u = -\nabla_u u + \eta\Delta u - \nabla p,\;
    \d\,u = 0 ,\;
    u(0,x) = u_0(x)  
\end{align*}
where $\eta$ is the viscosity and $u = u(t,x)$ and $p=p(t,x)$ are Eulerian velocity and pressure, respectively.

Our approach is based on the following ``separation of dynamics'' idea: We assume that each fluid parcel consists of a large number of identical particles. The corresponding dynamics is
then derived as a stochastic Hamiltonian system with respect to an energy that consists of two components:
\begin{enumerate}[\up (1)]
    \item 
    a deterministic part due to the total momentum of the ensemble of fluid particles;
    \item
    a stochastic part due to independent random impacts on the individual particles.
\end{enumerate}
Thus the particles interact deterministically via the total ensemble momentum. 
In the limit, as the number of particles goes to infinity, this interacting particle system (IPS) tends to a mean field stochastic differential equation~\eqref{e:sdeP}. Averaging over solutions to this mean field SDE yields a deterministic PDE that describes the deterministic dynamics of the original fluid parcel at the macroscopic level.    
Essentially only assuming that the random impacts are independent and Gaussian, we show that this PDE is the incompressible Navier-Stokes equation. The incompressibility condition follows because we use the space of volume preserving diffeomorphisms as the configuration space for our system.

\subsubsection{Description of contents}
Section~\ref{sec:prel} contains notation and preliminaries. We give a detailed exposition of the Hamiltonian structure used throughout the paper. 

Section~\ref{sec:ham-ips} describes the above mentioned Hamiltonian IPS for fluid dynamics. 
In order to have a simple and tractable picture, we start with a system of identical particles on the real line. Then we use the Hamiltonian structure introduced in Section~\ref{sec:prel} to transfer the construction to the phase space of incompressible fluid mechanics.

Section~\ref{sec:main} contains our main Equation~\eqref{e:sdeP} and shows how this is obtained as a mean field limit from the IPS in Section~\ref{sec:ham-ips}. We remark that this passage to the mean field equation is carried out under the assumption that the limit exists. 
Theorem~\ref{thm:main} then shows that averaging over solutions of \eqref{e:sdeP} leads to solutions of the Navier-Stokes equation for incompressible flow. 
Thus the Navier-Stokes equation is obtained from the McKean-Vlasov equation for \eqref{e:sdeP}.
A converse is proved as-well: a Gaussian stochastic perturbation of Lagrangian trajectories, corresponding to a solution  of the Navier-Stokes equation, yields a mean-field SDE whose mean field coincides with the original solution. Due to the Hamiltonian structure we also obtain a Kelvin Circulation Theorem (Proposition~\ref{prop:KC}) that holds along stochastic Lagrangian paths. 

Section~\ref{sec:energy} is concerned with issues of energy dissipation. It is shown that the system~\eqref{e:sdeP} neither conserves stochastic energy, nor does the average over the stochastic energy dissipate. However, considering a slight modification of the equation proposed in \cite{H17}, one does obtain a stochastic energy whose average dissipates and bounds the energy of corresponding solutions to the Navier-Stokes equation.    

Section~\ref{sec:IPS} discusses energy dissipation for the empirical mean of the corresponding IPS. This is carried out, as in Section~\ref{sec:energy}, for Equation~\eqref{e:sdeP} and for the system proposed in \cite{H17}. It is found that the average over the (stochastic) energy of the empirical mean need not dissipate. 
While this is contrary to intuition, since one would expect the energy of the empirical mean to behave as that of the deterministic solution, this result is consistent with \cite{IM08,IN11,NS17}. 
 
Section~\ref{sec:comp} offers a discussion of connections with other approaches to stochastic fluid dynamics. We compare our model~\eqref{e:sdeP} to representatives of both of the categories mentioned at the beginning of this introduction.
Moreover, in Section~\ref{sec:H17} it is explained why complying with the Kelvin Circulation Theorem is a desirable property. 

Section~\ref{sec:conc} recapitulates the results and draws conclusions.

\textbf{Acknowledgements.}
I am grateful to Darryl Holm for useful remarks and explanations. 
The referee reports are also gratefully acknowledged, in particular for pointing out references \cite{Holm02a,HT12,Mem14,RMC17,RMC17a,RMC17b}.

\section{Preliminaries: phase space and Hamiltonian structure}\label{sec:prel}
Let $(\Om,\F,(\F_t)_{t\in[0,T]},P)$ be a filtered probability space satisfying the usual assumptions. In the following, all stochastic processes shall be understood to be adapted to this filtration.

\subsection{Volume preserving diffeomorphisms}
Let
$M=T^n=\R^n/\mathbb{Z}^n$. We fix $s>1+n/2$ and let $G^s$ denote the infinite dimensional $\cinf$-manifold of $H^s$-diffeomorphisms on $M$. 
Further, $G^s_0$ denotes the submanifold of volume preserving diffeomorphisms of Sobolev class $H^s$. Both, $G^s$ and $G^s_0$, are  topological groups but not  Lie groups since left composition is only continuous but not smooth. Right composition is smooth.
The tangent space of $G^s$ (resp.\ $G^s_0$) at the identity $e$ shall be denoted by $\gu^s$ (resp.\ $\gu^s_0$). 
Let $\X^s(M)$ denote the vector fields on $M$ of class $H^s$ and $\X_0^s(M)$ denote the subspace of divergence free vector fields of class $H^s$.
We have $\gu^s_0 = \X^s_{0}(M)$ and $\gu^s=\X^s(M)$.
See \cite{EM70,MEF}.
We will keep track of the superscript $s$ only when it is needed for clarification. Otherwise, we shall assume $s$ to be fixed throughout and write $G$, $G_0$, $\gu$, $\gu_0$ instead of $G^s$, $G^s_0$, $\gu^s$, $\gu^s_0$.  

We use right multiplication $R^g: G_0\to G_0$, $k\mapsto k\circ g = kg$ to trivialize the tangent bundle $TG_0\cong G_0\times\gu_0$, $\xi_g\mapsto(g,(TR^g)^{-1}\xi_g)$. 

By right multiplication we extend the $L^2$ inner product $\ww<.,.>$ on $\gu_0$ to a Riemannian metric $\mu$ on $G_0$, that is
\[
  \mu_g(\xi_g,\eta_g) 
  = \int_M\vv<(TR^g)^{-1}\xi_g(x),(TR^g)^{-1}\eta_g(x)>\, dx
  = \ww<(TR^g)^{-1}\xi_g, (TR^g)^{-1}\eta_g>
\]
for $\xi_g,\eta_g\in T_gG$, where $dx$ is the standard volume element in $M$.
We shall subsequently use $\vv<.,.>$ to denote both, the standard inner product in $\R^n$ and the duality pairing~\eqref{e:dual-p}.  

\subsection{Leray-Hodge projection $P$}
The Leray-Hodge projection operator is defined as follows (see \cite[Corollary 1.4.4]{MEF}):
Consider an $H^s$ vector field $\xi$ on $M$. Then there is a unique divergence free vector field $\eta$ of class $H^s$ and a function $f$ on $M$ such that $\xi=\eta+\nabla f$. Setting $\eta=P\xi$ thus defines a bounded linear operator $P: \X^s(M)\to\X_0^s(M)$ (for arbitrary $s\ge0$). Further, $\ww<\eta,\nabla f> = 0$.

\subsection{Inertia operator}\label{sec:InOp}
Let $\gu_0'$ denote the space of continuous linear functionals on $\gu_0$. Since $\mu$ is a weak Riemannian metric, we do not obtain an isomorphism between $\gu_0$ and $\gu_0'$. Following \cite{AK98}, define the smooth dual $\gu_0^*$ as  
\[
 \gu_0^* = \Om^1(M)^s/\set{\psi\in\Om^1(M)^s: \psi = df}
\]
which is the space of $1$-forms of Sobolev class $H^s$ modulo exact $1$-forms of class $H^s$. 
The \emph{inertia tensor} is defined as the isomorphism 
\begin{equation}
    \label{equ:I}
    \check{\mu}: \gu_0\longto\gu_0^*,
    \quad
    \xi\longmapsto[\xi^{\flat}]
\end{equation}
where $\xi^{\flat}$ is the $1$-form associated to $\xi$ (raising indices)  and $[\cdot]$ denotes the equivalence class. The inverse of $\check{\mu}$ is given by $\check{\mu}^{-1}[\psi+df] = P\psi^{\sharp}$ where $\psi$ is a representative of the class $[\psi+df]\in\gu_0^*$ and $\sharp$ is the inverse to $\flat$.

The pairing between $[\xi^{\flat}]\in\gu_0^*$ and $\eta\in\gu_0$ is
\begin{equation}
\label{e:dual-p}
     \vv<[\xi^{\flat}],\eta> 
 = \ww<\xi,\eta>
 = \int_M \vv<\xi(x),\eta(x)>\,dx 
\end{equation} 
which is well-defined independently of the representative of  $[\xi^{\flat}]$.

We define the bracket $[.,.]$ to be the negative of the usual Lie bracket:
$
 [\xi,\eta] := -\nabla_{\xi}\eta + \nabla_{\eta}\xi
$
where $\nabla_{\xi}\eta=\vv<\xi,\nabla>\eta$. This choice of sign is compatible with \cite{A66,AK98,Michor06}. 

Define the operators $\ad$ and $\ad^*$ by  $\ad(\xi).\eta = [\xi,\eta]$ and $\ad^*(\xi).[\eta^{\flat}] = [-\eta^{\flat}\circ\ad(\xi)]$ for $\xi,\eta\in\gu_0$. 
Further, define $\ad^{\top}$ by $\ad^{\top}(\xi) := -\ad(\xi)^{\top}$ which is the transpose with respect to the weak inner product $\ww<.,.>$:
\[
 \ww<\ad(\xi)^{\top}.\eta, \zeta> 
 = \ww<\eta,\ad(\xi).\zeta>.
\]
Sobolev spaces are not closed under the Lie bracket, whence the operations $\ad$, $\ad^{\top}$ and $\ad^*$ lose one Sobolev index, that is $\ad(\xi).\eta=[\xi,\eta]\in\gu_0^{s-1}$ for $\xi,\eta\in\gu_0=\gu_0^s$.
Thus $\gu_0$ is not a Lie algebra.

\begin{lemma}\label{lem:I}
We have 
$\ad^*(\xi).\check{\mu}(\eta) = -\check{\mu}(\nabla_{\xi}\eta + \xi'\otimes\eta)$
and $\ad^{\top}(\xi).\eta = -P(\nabla_{\xi}\eta + \xi'\otimes\eta)$.
\end{lemma}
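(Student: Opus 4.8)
The two identities are equivalent, so I would organize the argument around the transpose formula and then transport it to the coadjoint statement via $\check{\mu}$.

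First I would record the defining relation of $\ad^{\top}$: for every test field $\zeta\in\gu_0$, unwinding $\ad^{\top}(\xi)=-\ad(\xi)^{\top}$ together with $\ad(\xi).\zeta=[\xi,\zeta]$ gives $\ww<\ad^{\top}(\xi).\eta,\zeta> = -\ww<\eta,[\xi,\zeta]>$. This reduces the problem to evaluating the right-hand side and recognizing it as an $L^2$-pairing against an explicit field. I would then expand $[\xi,\zeta]=-\nabla_{\xi}\zeta+\nabla_{\zeta}\xi$ and handle the two resulting terms by integration by parts on $M=T^n$ (no boundary terms, since $M$ is a closed flat torus). For $\ww<\eta,\nabla_{\xi}\zeta>$, integrating by parts and using $\d\,\xi=0$ moves the derivative onto $\eta$ and yields $-\ww<\nabla_{\xi}\eta,\zeta>$; divergence-freeness of $\xi$ is precisely what kills the unwanted term $\eta^{i}(\partial_{j}\xi^{j})\zeta^{i}$. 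For $\ww<\eta,\nabla_{\zeta}\xi>$ no integration by parts is needed: a relabeling of indices rewrites it as $\ww<\xi'\otimes\eta,\zeta>$, where $\xi'\otimes\eta$ is the field with components $(\xi'\otimes\eta)^{j}=\eta^{i}\partial_{j}\xi^{i}$. Combining, $\ww<\ad^{\top}(\xi).\eta,\zeta> = -\ww<\nabla_{\xi}\eta+\xi'\otimes\eta,\zeta>$ for all $\zeta\in\gu_0$.

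To extract $\ad^{\top}(\xi).\eta$ itself I would observe that $\zeta$ ranges only over divergence-free fields, so the preceding identity pins down the right-hand side only up to a field $\ww<.,.>$-orthogonal to $\gu_0$. By the Hodge property $\ww<\eta,\nabla f>=0$, gradients are exactly such fields, so I may replace $\nabla_{\xi}\eta+\xi'\otimes\eta$ by its Leray-Hodge projection without altering the pairing. This gives $\ad^{\top}(\xi).\eta=-P(\nabla_{\xi}\eta+\xi'\otimes\eta)$, the second identity. For the coadjoint identity I would first note $\ad^{*}(\xi).\check{\mu}(\eta)=\check{\mu}(\ad^{\top}(\xi).\eta)$: pairing the left side against $\zeta$ and using $\ad^{*}(\xi).[\eta^{\flat}]=[-\eta^{\flat}\circ\ad(\xi)]$ produces $-\ww<\eta,[\xi,\zeta]>=\ww<\ad^{\top}(\xi).\eta,\zeta>=\vv<\check{\mu}(\ad^{\top}(\xi).\eta),\zeta>$. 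Since $\check{\mu}(w)=[w^{\flat}]$ and exact forms vanish in $\gu_0^{*}=\Om^1(M)^s/\set{df}$, the operator $\check{\mu}$ annihilates gradients, hence $\check{\mu}\circ P=\check{\mu}$ on all of $\X^s(M)$; the $P$ may therefore be dropped, giving $\ad^{*}(\xi).\check{\mu}(\eta)=-\check{\mu}(\nabla_{\xi}\eta+\xi'\otimes\eta)$.

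The genuinely delicate points are bookkeeping rather than conceptual: keeping the sign convention $[\xi,\eta]=-\nabla_{\xi}\eta+\nabla_{\eta}\xi$ straight throughout, and explaining why $P$ appears in the $\ad^{\top}$ formula but not in the $\ad^{*}$ formula. The apparent discrepancy is reconciled exactly by the fact that passing to $\gu_0^{*}$ quotients out the exact forms that $P$ removes at the level of vector fields, so I would make sure to state this explicitly. The main (minor) obstacle is verifying that the integration by parts carries no boundary contribution and that $\d\,\xi=0$ is invoked in the right place; both hold because $M$ is the flat torus and $\xi\in\gu_0=\X_0^s(M)$.
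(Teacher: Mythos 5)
Your proposal is correct and follows essentially the same route as the paper: both reduce to evaluating $-\ww<\eta,[\xi,\zeta]>$ against a divergence-free test field $\zeta$, integrate by parts using $\d\,\xi=0$ on the torus, and identify the remaining term as $\ww<\xi'\otimes\eta,\zeta>$. Your write-up merely adds the (correct and worthwhile) explicit justification of why $P$ appears in the $\ad^{\top}$ formula but is invisible in the $\ad^{*}$ formula, which the paper leaves implicit.
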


\begin{proof}
Let $\xi,\eta,\zeta\in\gu_0$
\begin{align*}
    (\ad^*(\xi).\check{\mu}(\eta))(\zeta)
    &= -\int_M\vv<\eta , [\xi,\zeta]>\,dx
     = \int_M \vv<\eta, \nabla_{\xi}\zeta - \nabla_{\zeta}\xi>\, dx \\
    &= \int_M\Big(\d(\vv<\eta,\zeta>\xi)
                  - \vv<\nabla_{\xi} \eta , \zeta>
                  - \vv<\xi'\otimes \eta , \zeta> \Big)\,dx
\end{align*}
\end{proof}

\subsection{Symplectic structure on $G_0\times\gu_0$}
Due to the identification $T^*G_0\cong G_0\times\gu_0^*\cong G_0\times\gu_0$, the space $G_0\times\gu_0$ carries a natural symplectic structure. We follow, and slightly adapt, the exposition of \cite[Section~4]{Michor06} to describe this structure. 
The second tangent bundle is expressed, again via right multiplication, as
$
 T(G_0\times\gu_0)
 = G_0\times\gu_0\times\gu_0\times\gu_0
 = \set{(g,\xi,\dot{g},\dot{\xi})} 
$.
The natural exact symplectic form $\Om$ on $G_0\times\gu_0$ is given by 
\begin{equation}
    \label{e:sp_form}
    \Om_{(g,\xi)}\Big((\dot{g}_1,\dot{\xi}_1),(\dot{g}_2,\dot{\xi}_2)\Big)
    =
    \ww<\dot{\xi}_2,\dot{g}_1> + 
    \ww<-\dot{\xi}_1 + \ad^{\top}(\dot{g}_1).\xi, \dot{g}_2>.   
\end{equation}
The Hamiltonian vector field $X_f$ of a function $f$ is defined by $i(X_f)\Om = df$. It is given explicitly by
\begin{equation}
    \label{e:ham_vf}
    X_f(g,\xi)
    = \Big(
    \textup{grad}_2 f\,(g,\xi),
    \ad^{\top}(\textup{grad}_2 f\,(g,\xi)).\xi - \textup{grad}_1 f\,(g,\xi)
    \Big)
\end{equation}
where $\textup{grad}_1 f$ and $\textup{grad}_2 f$ are the partial gradients. 
Since $\mu$ is only a weak Riemannian metric, the gradients need not exist. But if they do, then so does the Hamiltonian vector field, and it is given by the above formula. 

\subsection{The momentum map}
The particle relabeling symmetry group $G_0$ acts on itself by  right multiplication $R^g: G^s_0\to G^s_0$. The tangent lifted action acts on $G^s_0\times\gu^s_0$ via $TR^g: (k,\xi)\mapsto(kg,\xi)$. 
The associated momentum map is given by 
\begin{equation}
    \label{e:momap}
 J: G^s_0\times\gu^s_0\longto(\gu^{s-1}_0)^*,
 \;
 (g,\xi)\longmapsto \Ad( g)^*[\xi^\flat] = [g^*\xi^{\flat}]
\end{equation} 
where $\Ad( g)\xi = TL_{ g}.(TR^{ g})^{-1}.\xi = T g\circ\xi\circ g^{-1}$, $L_{ g}$ is left multiplication by $g$, and $\Ad(g)^*=:\Ad^*(g^{-1})$ is the adjoint of $\Ad(g)$ with respect to the dual pairing. Further,  $g^*\xi^{\flat}$ is the pullback of the one-form $\xi^{\flat}$ by the diffeomorphism $g$. Since the pullback involves differentiation, the result is only of Sobolev class $H^{s-1}$. Thus $J$ is not a momentum map in the classical sense. Nevertheless, we can use it as long as  the expression $J(g,\xi)$  makes sense and is of sufficient regularity. 

%However, since $L_g$ is not smooth, the same is true for $\Ad(g)$. See \cite{MEF}. In the following we tacitly assume that $\Ad(g)\xi$, $\Ad( g)^*[\xi^\flat]$ and $\Ad^{\top}(g)$ (defined below) exist and are sufficiently regular for the calculations to make sense. 

\begin{proposition}[Kelvin Circulation Theorem]
\label{prop:kct_gen}
Let $\mathcal{N}\subset\mathbb{N}$ and consider Hamiltonian functions $h, f^a: G_0^s\times\gu_0^s\to\R$ for $a\in\mathcal{N}$. Let $(W^a)$ for $a\in\mathcal{N}$ be a sequence of pairwise independent Brownian motions. Assume $\gamma_t = (g_t,\xi_t)$ solves the Stratonovich Hamiltonian SDE
\[
\delta\gamma_t = X_h(\gamma_t)\,\delta t + \sum_{a\in\mathcal{N}}X_{f^a}(\gamma_t)\;\delta W^a_t
\]
for $t\in[0,T]$. If $h$ and $f^a$ are invariant under $G^s_0$, then, for any closed smooth curve $C$ in $M$,
\[
 \delta\int_{(g_t)_*C}\xi_t^{\flat} = 0.
\]
\end{proposition}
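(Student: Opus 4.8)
The plan is to reduce the statement to conservation of the momentum map \eqref{e:momap} and then to verify this conservation by a direct Stratonovich computation. First I would rewrite the circulation as an integral over the fixed curve $C$: by the change-of-variables formula for pullbacks,
\[
 \int_{(g_t)_*C}\xi_t^{\flat} = \int_C g_t^*\xi_t^{\flat}.
\]
The integrand $g_t^*\xi_t^{\flat}$ is precisely a representative of $J(g_t,\xi_t)=[g_t^*\xi_t^{\flat}]\in\gu_0^*$, and since $C$ is closed the operation $\int_C$ annihilates exact $1$-forms; hence the circulation depends only on the class $[g_t^*\xi_t^{\flat}]$ and is well defined. Consequently it suffices to show that $\delta(g_t^*\xi_t^{\flat})$ is an exact $1$-form, equivalently that the momentum map $J(\gamma_t)$ is conserved. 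This is the stochastic incarnation of Noether's theorem for the particle-relabeling symmetry $R^g$.

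Since $C$ is fixed I may exchange $\delta$ with $\int_C$, so the task is to compute $\delta(g_t^*\xi_t^{\flat})$. Invariance of $h$ and of each $f^a$ under $R^g:(k,\xi)\mapsto(kg,\xi)$ forces these Hamiltonians to be independent of the $G_0$-variable, whence $\textup{grad}_1 h = 0$ and $\textup{grad}_1 f^a = 0$. Reading off the Hamiltonian vector field \eqref{e:ham_vf}, the right-trivialized velocity of $g_t$ is the Stratonovich vector field $u:=\textup{grad}_2 h\,\delta t + \sum_{a}\textup{grad}_2 f^a\,\delta W^a_t$, while the fibre evolves by $\delta\xi_t = \ad^{\top}(\textup{grad}_2 h).\xi_t\,\delta t + \sum_a \ad^{\top}(\textup{grad}_2 f^a).\xi_t\,\delta W^a_t$. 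Because Stratonovich calculus obeys the ordinary chain rule, the pullback satisfies the stochastic transport (Lie-derivative) identity
\[
 \delta\big(g_t^*\xi_t^{\flat}\big) = g_t^*\big(\delta\xi_t^{\flat} + \mathcal{L}_{u}\,\xi_t^{\flat}\big),
\]
with no Itô correction term, where $\mathcal{L}_u$ is understood term-by-term against $\delta t$ and each $\delta W^a_t$.

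The heart of the argument is then the algebraic cancellation. For the flat metric one has, for any $v\in\gu_0$, the identity $\mathcal{L}_v\xi^{\flat} = (\nabla_v\xi + v'\otimes\xi)^{\flat}$, while Lemma~\ref{lem:I} gives $\ad^{\top}(v).\xi = -P(\nabla_v\xi + v'\otimes\xi)$. Adding these, the divergence-free parts cancel and only the gradient complement $(\mathrm{id}-P)(\nabla_v\xi+v'\otimes\xi)=\nabla q_v$ of the Leray-Hodge decomposition survives, so that
\[
 \big(\ad^{\top}(v).\xi\big)^{\flat} + \mathcal{L}_v\xi^{\flat} = \big(\nabla q_v\big)^{\flat} = dq_v
\]
is exact. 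Applying this with $v=\textup{grad}_2 h$ against $\delta t$ and with $v=\textup{grad}_2 f^a$ against each $\delta W^a_t$ shows that $\delta\xi_t^{\flat}+\mathcal{L}_u\xi_t^{\flat}$ is exact, hence so is $\delta(g_t^*\xi_t^{\flat})=g_t^*(dq)$. Integrating over the closed curve $C$ yields $\delta\int_C g_t^*\xi_t^{\flat}=\int_C d(\cdots)=0$, which is the claim.

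The main obstacle is not the algebra but the analytic justification of the stochastic pullback identity. One must invoke the theory of stochastic flows to make sense of $g_t^*\xi_t^{\flat}$ as a semimartingale-valued $1$-form and to validate the Stratonovich chain rule in this infinite-dimensional, form-valued setting; the use of Stratonovich rather than Itô calculus is precisely what guarantees that no correction term spoils the cancellation. A secondary technical point, already flagged before \eqref{e:momap}, is the loss of one Sobolev index under $\ad^{\top}$ and under pullback, so the computation should be read at the regularity $H^{s-1}$ at which $J(\gamma_t)$ and its differential are well defined.
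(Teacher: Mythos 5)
Your argument is correct and reaches the same intermediate target as the paper --- conservation of the momentum map $J(g_t,\xi_t)=[g_t^*\xi_t^{\flat}]\in\gu_0^*$, followed by the observation that exact forms integrate to zero over closed loops --- but you establish that conservation differently. The paper simply invokes the stochastic Noether theorem of \cite{LCO08} for the $G_0$-invariant vector-valued Hamiltonian $(h,f^a)$, which makes the general proposition a three-line corollary. You instead verify Noether by hand: invariance kills $\textup{grad}_1$, the flow equations are read off \eqref{e:ham_vf}, and the Stratonovich transport identity $\delta(g_t^*\xi_t^{\flat})=g_t^*(\delta\xi_t^{\flat}+\mathcal{L}_u\xi_t^{\flat})$ combined with Lemma~\ref{lem:I} and $\mathcal{L}_v\xi^{\flat}=(\nabla_v\xi+v'\otimes\xi)^{\flat}$ shows that the increment is the exact form $dq$ supplied by the Leray--Hodge complement $(\id-P)(\nabla_v\xi+v'\otimes\xi)=\nabla q$. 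This is precisely the mechanism of the paper's ``explicit proof'' of the specialized Proposition~\ref{prop:KC} (computation \eqref{e:kct1}), transplanted to a general invariant Hamiltonian. What the citation buys is brevity and independence from the explicit formula for $\ad^{\top}$; what your computation buys is a self-contained argument that makes visible why the line-stretching terms are exactly what the cancellation requires --- the point the paper emphasizes in Section~\ref{sec:H17}. The one step not to gloss over is the form-valued Stratonovich product/chain rule (Kunita--It\^o--Wentzell in Stratonovich form) behind the transport identity, together with the loss of one Sobolev index under pullback and $\ad^{\top}$; you flag both correctly, and the paper's own \eqref{e:kct1} operates at the same level of informality, so your proof meets a comparable standard of rigor.
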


\begin{proof}
By the stochastic Noether Theorem (\cite{LCO08}),  $J(g_t,\xi_t)$ is constant in $t$. Thus
\[
 \delta (\Ad( g_t)^*.[\xi_t^{\flat}])
 = 0 
 \in\gu_0^*,
\]
whence there is a time dependent function $p_t$ such that $\delta(\xi_t\circ\Ad( g_t)) = d p_t$. \revise{Full} differentials vanish when integrated over a closed loop $C$. Therefore,
\[
  0 
  = \int_C \delta( \xi_t^{\flat}\circ\Ad( g_t) )
  = \delta \int_C g_t^*\xi^{\flat} 
  = \delta \int_{( g_t)_*C}\xi_t^{\flat}.
\]
\end{proof}

The \emph{mechanical connection} is defined by $A := \check{\mu}^{-1}\circ J$, that is
\begin{equation}
    \label{e:mech_con}
 A: G^s_0\times\gu^s_0\longto \gu^{s-1}_0,
 \;
 (g,\xi)\longmapsto 
 \check{\mu}^{-1}\Big(\revise{\Ad^*(g^{-1})}[\xi^\flat]\Big)
 =: \Ad^{\top}(g^{-1})(\xi) .  
\end{equation}

\section{A Hamiltonian interacting particle system for fluid dynamics}\label{sec:ham-ips}

\subsection{Translational kinetic energy of a particle ensemble}\label{sec:det_dyn}
Consider a system of $N$ identical particles in $\R$ with positions $q_1,\dots,q_N\in\R$ and total mass $M=1$. 
Suppose that the dynamics of the system is determined entirely by the translational kinetic energy of the particle ensemble.
The phase space of the system is $T^*\R^N$ and we identify $T^*\R^N=T\R^N$ via the standard inner product $\vv<.,.>$. Let $\wt{\om} = \by{1}{N}\om$ denote the natural symplectic form on $T\R$. Then the product symplectic form 
is 
\[
 \wt{\om^N} 
 := \by{1}{N}\om^N 
 := \sum_{j=1}^N\pi_j^*\wt{\om}
 = \by{1}{N}\sum_{j=1}^N dq^j\wedge dv^j  
\]
where $(q,v) = (q^1,\dots,q^N,v^1,\dots,v^N)$ are coordinates on $T\R^N$ and 
$\pi_j: T\R^N\to T\R$ is the projection onto the $j$-th factor. 
Since $\by{1}{N}\sum_{j=1}^N v^j$ is the translational momentum of the system, the kinetic energy is given by 
\[
 \wt{\mathcal{H}}(q,v)
 := \by{1}{N}\mathcal{H}(q,v)
 := \by{1}{2N^2}\sum_{j,k=1}^N\vv<v^j,v^k>
 = \by{1}{2N^2}\sum_{j=1}^N\Big(\vv<v^j,v^j> + \sum_{k\neq j}\vv<v^j,v^k>\Big). 
\]
Since $i(X)\wt{\om^N} = d\wt{\mathcal{H}}$ if and only if $i(X)\om^N = d\mathcal{H}$, the dynamics of the system is given by the Hamiltonian vector field $X=X_{\mathcal{H}}$ with respect to $\om^N$. 
Denote the $i$-th component of the Hamiltonian vector field by $X^{i}_{\mathcal{H}}(q,v) = T\pi_i.X_{\mathcal{H}}(q,v)$. We remark that $X^{i}_{\mathcal{H}}(q,v)$ is not a vector field on $T\R$. Since $\om^N = \sum_{j=1}^N\pi_j^*\om$, it follows that the $i$-th component $\gamma^i(t)$ of the integral curve $\gamma(t) = (q(t),v(t))$ is determined by
\begin{equation}
    \label{e:det_dyn}
    \dd{t}{}\gamma^i(t)
    = X_{\mathcal{H}}^{i}(\gamma(t))
    = 
    \left(
    \begin{matrix}
    \dd{v^i}{}\mathcal{H}(\gamma(t)) \\
    -\dd{q^i}{}\mathcal{H}(\gamma(t)) 
    \end{matrix}
    \right)
    = 
    \left(
    \begin{matrix}
    \by{1}{N}\sum_{j=1}^N v^j(t) \\
     0
    \end{matrix}
    \right)
\end{equation}

\subsection{The Hamiltonian construction of Brownian motion in finite dimensions}
Let $(Q,\mu^Q)$ be a Riemannian manifold with dimension $\dim Q=m$. For a vector field $X\in\X(Q)$ we define the \emph{momentum function} $F_X: TQ\to\R$ by 
\begin{equation}
    \label{e:mom_fun}
    F_X(Y) = \mu^Q(X(\tau(Y)), Y)
\end{equation}
where $\tau: TQ\to Q$ is the tangent projection. 
We use the metric isomorphsim $\check{\mu}^Q:TQ\to T^*Q$ to transfer the natural symplectic form $\om^{T^*Q}$ on $T^*Q$ to a symplectic form $\om^{TQ} = (\check{\mu}^Q)^*\om^{T^*Q}$ on $TQ$. Let $X_f$ denote the Hamiltonian vector field on $TQ$ with respect to $\om^{TQ}$. 

Assume now that there is a global orthonormal frame $(x_{\alpha})_{\alpha=1}^m$ on $Q$ such that $\nabla_{x_{\alpha}}x_{\alpha}=0$. We shall abbreviate $F^{\alpha}:=F_{x_{\alpha}}$. 
Consider Brownian motion $(W^{\alpha})_{\alpha=1}^m$ in $\R^m$ and assume that the stochastic process $\gamma$ in $TQ$ is a solution to the Stratonovich SDE
\begin{equation}
    \label{e:ham_bm}
    \delta\gamma
    = \sum X_{F^{\alpha}}(\gamma)\,\delta W^{\alpha}.
\end{equation}
It is shown in \cite[Section~3.4]{LCO08} that then $\tau\circ\gamma$ is Brownian motion in $Q$. 

\subsection{Stochastic perturbation of \eqref{e:det_dyn}}
Let us consider the case $Q=\R$ and $\om^{T\mathbb{R}} = \om = dq\wedge dv$ where $(q,v)$ are coordinates on $T\R$. 
Let $\wt{F}: T\R\to\R$ $(q,v)\mapsto \by{1}{N}e v =: \by{1}{N}F(q,v)$ where $e=1$ is viewed as the standard basis vector in $\R$. Notice, as in Section~\ref{sec:det_dyn}, that $i(X)\wt{\om} = d\wt{F}$ if and only if $i(X)\om = dF$. Let $X_F$ be the Hamiltonian vector field of $F$ with respect to $\om$. The stochastic perturbation of Equation~\eqref{e:det_dyn} is therefore given by
\begin{equation}
    \label{e:stoch_dyn}
    \delta \gamma^i_t
    = X_{\mathcal{H}}^i(\gamma_t)\,\delta t
     + X_{F}(\gamma_t^{\revise{i}})\,\delta W_t^i
\end{equation}
where $W_t^1,\dots,W_t^N$ are independent copies of Brownian motion in $\R$ and $\gamma_t=(\gamma^i_t)_i$. 
This describes a particle in an ensemble, where the energy of each (identical) particle is given as a sum of two components:
\begin{enumerate}[\up (1)]
    \item a deterministic part due to the translational momentum of the ensemble;
    \item an internal energy that is modeled as a Brownian motion. 
\end{enumerate}
The corresponding model for the full system of particles is given by the stochastic Hamiltonian equation
\begin{equation}
    \label{e:stoch_dyn2}
    \delta \gamma_t
    = X_{\mathcal{H}}(\gamma(t))\,\delta t
     + \sum_{i=1}^N X_{\pi_i^*F}(\gamma_t)\,\delta W_t^i
\end{equation}
on $T\R^N$, where $X_{\pi_i^*F}$ is the Hamiltonian vector field of the pullback $\pi_i^*F$  to $T\R^N$.
We have $T\pi_i.X_{\pi_i^*F} = X_F\circ\pi_i$. 
Equation \eqref{e:stoch_dyn2} is the motivation for \eqref{e:ham_ips1} below.

\subsection{An orthogonal system on $\gu_0$}
Let
\[
 \mathbb{Z}_n^+
 := \set{k\in\mathbb{Z}_n: k_1>0
 \textup{ or, for } i=2,\ldots,n,
 k_1=\ldots=k_{i-1}=0, k_i>0
 }.
\]
For $k\in\mathbb{Z}^+_n$ let $k_1^{\bot},\ldots,k_{n-1}^{\bot}$ denote a choice of pairwise orthogonal vectors in $\R^n$ such  that $|k_i^{\bot}|=|k|$ and $\vv<k_i^{\bot},k>=0$ for all $i=1,\ldots,n-1$. 

In \cite[Appendix]{CS09} the following system of vectors is introduced:
\begin{equation*}
    A_{(k,i)} =  \frac{1}{|k|^{s+1}}\cos\vv<k,x>k^{\bot}_i,\;
    B_{(k,i)} = \frac{1}{|k|^{s+1}}\sin\vv<k,x>k^{\bot}_i,\;
    A_{(0,j)} = e_j 
\end{equation*}
Here $e_j\in\R^n$ is the $j$th standard vector. 
By slight abuse of notation we identify these vectors with their corresponding right invariant vector fields on $G_0$. 

Further, we shall make use of the multi-index notation $\alpha = (k,i,a)$ where $k\in\mathbb{Z}_n^+$ and $a=0,1,2$ such that
\begin{align*}
    X_{\alpha} &= A_{(0,i)}
    \textup{ with } i=1,\ldots,n
    \textup{ if } a=0\\
    X_{\alpha} &= A_{(k,i)} 
        \textup{ with } i=1,\ldots,n-1
   \textup{ if } a=1\\
    X_{\alpha} &= B_{(k,i)} 
        \textup{ with } i=1,\ldots,n-1
    \textup{ if } a=2  
\end{align*}
Thus by a sum over $\alpha$ we shall mean a sum over these multi-indices, and this notation for $X_{\alpha}$ will be used throughout the rest of the paper. 

It is shown in \cite[Appendix]{CS09} that the $X_{\alpha}$ form an orthogonal system of basis vectors in $\gu_0$ such that 
\begin{equation}\label{e:nablaXX} 
 \nabla_{X_{\alpha}}X_{\alpha} = \vv<X_{\alpha},\nabla>X_{\alpha} = 0.
\end{equation} 
and, for $\xi\in\X(M)$, 
\begin{equation}\label{e:Delta}
 \sum \nabla_{X_{\alpha}}\nabla_{X_{\alpha}}\xi = c^s\Delta\xi
\end{equation}
where $c^s>0$ is a constant and $\Delta$ is the vector Laplacian.

\begin{proposition}[\cite{CC07,C,DZ}]\label{prop:bm}
Let $W_t = \sum X_{\alpha} W_t^{\alpha}$, where $W_t^{\alpha}$ are independent copies of Brownian motion in $\R$. Then $W$ defines (a version of) Brownian motion (i.e., cylindrical Wiener process) in $\gu_0$. 
\end{proposition}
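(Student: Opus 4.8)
The plan is to recognize $W_t=\sum X_\alpha W_t^\alpha$ as the canonical series construction of a cylindrical Wiener process and then to verify the defining covariance structure directly from the independence of the scalar Brownian motions. The one structural fact I need is that $(X_\alpha)$ is a complete orthonormal system for $\gu_0$: mutual orthogonality and totality (``basis vectors'') are supplied by \cite[Appendix]{CS09}, and the normalizing factor $|k|^{-(s+1)}$ is precisely what makes $\|X_\alpha\|=1$ in the $H^s$ inner product --- call it $(\cdot,\cdot)_s$ --- that gives $\gu_0=\gu_0^s$ its topology. In the weak $L^2$ product $\ww<.,.>$ the $X_\alpha$ are still orthogonal but have norms $\sim|k|^{-s}$, which is exactly why the process is cylindrical with respect to $(\cdot,\cdot)_s$ rather than a trace-class Wiener process with respect to $\ww<.,.>$.

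First I would recall the characterization to be matched: on a separable Hilbert space $H$ with orthonormal basis $(e_\alpha)$ and independent standard real Brownian motions $(W^\alpha)$, the cylindrical Wiener process is the object $W$ whose scalar projections $(W_t,h):=\sum_\alpha (e_\alpha,h)\,W_t^\alpha$ satisfy, for all $h,g\in H$, that $t\mapsto(W_t,h)$ is a real Brownian motion with $\mathbb{E}[(W_t,h)(W_s,g)]=(s\wedge t)(h,g)$, the whole family being jointly Gaussian. Taking $H=(\gu_0,(\cdot,\cdot)_s)$ and $e_\alpha=X_\alpha$, the series in the statement is exactly this construction.

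The remaining content is the convergence-and-covariance bookkeeping. For fixed $h\in\gu_0$ the partial sums of $\sum_\alpha (X_\alpha,h)_s\,W_t^\alpha$ are $L^2(\Omega)$-Cauchy, since independence gives $\mathbb{E}\big|\sum_{\alpha\in F}(X_\alpha,h)_s W_t^\alpha\big|^2 = t\sum_{\alpha\in F}(X_\alpha,h)_s^2$, a tail of the convergent Parseval series $\sum_\alpha (X_\alpha,h)_s^2=\|h\|_s^2$; hence $(W_t,h)$ is well defined. The same computation with $\mathbb{E}[W_t^\alpha W_s^\beta]=\delta_{\alpha\beta}(s\wedge t)$ yields $\mathbb{E}[(W_t,h)(W_s,g)]=(s\wedge t)\sum_\alpha (X_\alpha,h)_s(X_\alpha,g)_s=(s\wedge t)(h,g)_s$, again by Parseval. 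Gaussianity, independence of increments, and path continuity of each scalar projection are inherited from the driving $W^\alpha$. These are precisely the defining properties of a cylindrical Wiener process on $\gu_0$, which is what \cite{CC07,C,DZ} establish.

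The point demanding care --- rather than a genuine obstacle --- is the sense in which $W_t$ ``lives'' in $\gu_0$. A cylindrical process does not converge strongly in $\gu_0$ itself: $\mathbb{E}\|\sum X_\alpha W_t^\alpha\|_s^2 = t\sum_\alpha\|X_\alpha\|_s^2=\infty$ because $(X_\alpha)$ is an infinite orthonormal set. Thus the statement must be read at the level of the scalar projections $(W_t,h)$, $h\in\gu_0$ --- equivalently, as strong convergence in a weaker negative-order Sobolev completion $\gu_0^{s'}$, which holds once $s-s'>n/2$ since then $\sum_\alpha\|X_\alpha\|_{s'}^2\sim\sum_k |k|^{2(s'-s)}<\infty$. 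Pinning down this topology, and checking the summability behind the Parseval step for $s>1+n/2$, is the only place any real work is needed; everything else is immediate from orthonormality and the independence of the component Brownian motions.
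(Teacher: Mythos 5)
Your argument is correct in substance, but note that the paper itself offers no proof of Proposition~\ref{prop:bm}: it is imported wholesale from \cite{CC07,C,DZ}, so there is no in-text argument to compare against. What you have written is precisely the standard Da Prato--Zabczyk series construction of a cylindrical Wiener process (well-definedness of the scalar projections via the $L^2(\Omega)$-Cauchy/Parseval estimate, the covariance identity $E[(W_t,h)(W_s,g)]=(s\wedge t)(h,g)_s$, and convergence of the full series only in a larger Hilbert space), which is exactly the content the citation is standing in for; so the route is the expected one and the proof is complete at the level of rigor the paper demands. Two side remarks in your last two paragraphs should be corrected, though neither affects the validity of the main argument. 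First, the dichotomy ``cylindrical with respect to $(\cdot,\cdot)_s$ \emph{rather than} trace-class with respect to $\ww<.,.>$'' is misleading: since $\|X_\alpha\|_{L^2}\sim|k|^{-s}$ and $2s>n$, the covariance $Q=\sum X_\alpha\otimes X_\alpha$ \emph{is} trace class in $L^2$, so $W_t$ is simultaneously a cylindrical Wiener process on $\gu_0^s$ and a genuine $Q$-Wiener process on the $L^2$ completion; both descriptions coexist. Second, the completion in which the series converges strongly need not be of negative order: your own criterion $s-s'>n/2$ together with $s>1+n/2$ allows any $s'<s-n/2$, and $s-n/2>1$, so one already has strong convergence in $H^{s'}$ for some $s'>1$ (in particular in $L^2$). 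Finally, the paper only asserts that the $X_\alpha$ are an \emph{orthogonal} system; they are orthonormal in $H^s$ only up to uniformly bounded constants, so strictly speaking $W$ is a $Q$-cylindrical process with $Q$ bounded above and below --- an equivalent inner product away from the identity covariance, which is consistent with the proposition's phrase ``a version of'' Brownian motion.
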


\subsection{The Hamiltonian Interacting Particle System (IPS)}
We consider the phase space $(G_0\times\gu_0)^N$ for (large) $N\in\mathbb{N}$.
Let $\pi_i: (G_0\times\gu_0)^N\to G_0\times\gu_0$ be the projection onto the $i$-th factor. 
Let $\Om$, from now on, be the symplectic form \eqref{e:sp_form} on $G_0\times\gu_0$ and $\Om^N := \sum_{i=1}^N\pi_i^*\Om$ the product symplectic form. 

Following \eqref{e:stoch_dyn2} we define the vector valued Hamiltonian $(\mathcal{H}^N,F_i^{\alpha})_{i,\alpha}$ by  
\begin{align}
    \mathcal{H}^N(\Gamma) &= \by{1}{2N}\sum_{i,j=1}^N\ww<\xi^i,\xi^j> \\  
    F_i^{\alpha}(\Gamma)  &= \ww<\xi^i, X_{\alpha}> 
\end{align} 
where  $\Gamma = (\Gamma_i)_{i=1}^N = (g^i,\xi^i)_{i=1}^N\in(G_0\times\gu_0)^N$.
For each $\alpha$ let $(W^{\alpha,i})_{i=1}^N$ be Brownian motion in $\R^N$ such that $W^{\alpha,i}$ and $W^{\beta,j}$ are independent for $(\alpha,i)\neq(\beta,j)$.  For $\eta>0$ we use $\nu=\sqrt{\by{2\eta}{c^s}}$ to scale the white noise term. 
The resulting Stratonovich SDE in $(G_0\times\gu_0)^N$ associated to the collection $(\mathcal{H}^N,F_i^{\alpha})_{i,\alpha}$ is  
\begin{equation}
   \label{e:ham_ips1}
    \delta \Gamma_t
    = 
    X_{\mathcal{H}^N}(\Gamma_t)\,\delta t 
    + \nu\sum_{\revise{i},\alpha}X_{\pi_i^*F_i^{\alpha}}(\Gamma_t)\,\delta W^{\alpha,i}_t
\end{equation}
%This is the Hamiltonian IPS from which we will derive below the central Equation~\eqref{e:sdeP}.
Note that $T\pi_i.X_{\pi_i^*F_i^{\alpha}}(\Gamma) = X_{F_i^{\alpha}}(\pi_i(\Gamma))$ where $X_{\pi_i^*F_i^{\alpha}}$ is the Hamiltonian vector field on $(G_0\times\gu_0)^N$. 
The term $\sum_{\revise{i},\alpha}X_{\pi_i^* F_i^{\alpha}}(\Gamma_t)\,\delta W^{\alpha,i}_t$ is an infinite dimensional version of Equation~\eqref{e:ham_bm}. This construction is justified by Proposition~\ref{prop:bm}.
Equation~\eqref{e:ham_vf} yields
\begin{align*}
    T\pi_i . X_{\mathcal{H}^N}(\Gamma) 
    &= 
    \Big(
    \by{1}{N}\sum_{j=1}^N\xi^j, \ad^{\top}(\by{1}{N}\sum_{j=1}^N\xi^j).\xi^i \Big)\\
    T\pi_i . X_{F_i^{\alpha}}(\Gamma)
    &=
    \Big( X_{\alpha}, \ad^{\top}(X_{\alpha}).\xi^i \Big) 
\end{align*}
Therefore, the Hamiltonian equations of motion \eqref{e:ham_ips1} for $\Gamma_t = (g^i_t,\xi^i_t)_{i=1}^N$ can be written as
\begin{align}
    \label{e:ham_ips2}
    \delta g^i_t
    &= TR^{g^i_t}\Big(
        \by{1}{N}\sum_{j=1}^N\xi^j_t \,\delta t
        + \nu\sum_{\alpha} X_{\alpha}\,\delta W^{\alpha,i} \Big)\\
    \label{e:ham_ips3}
    \delta \xi^i_t
    &= \ad^{\top}(\by{1}{N}\sum_{j=1}^N\xi^j_t).\xi^i_t\,\delta t
    + \nu\sum_{\alpha}\ad^{\top}(X_{\alpha}).\xi^i_t\,\delta W^{\alpha,i} 
\end{align}

%%%%%%%%%%%%%%%%%%%%%%
%%%%%%%%%%%%%%%%%%%%%%
%%%%%%%%%%%%%%%%%%%%%%

\section{The mean field system}\label{sec:main}

\subsection{The mean field limit}
Stochastic mean field equations in a Hilbert space setting are treated in \cite{AD95}. See also \cite{Sni91,Mel96} for finite dimensional convergence results. Consider equations~\eqref{e:ham_ips2} and \eqref{e:ham_ips3} together with a deterministic initial condition $\Gamma^i_0=(e,u_0)$ for $i=1,\dots,N$ and where $e\in G_0$ is the identity in the group.
For the following we assume there exists a $T>0$, independent of $N$ but with a possible dependence on the initial condition, such that the system~\eqref{e:ham_ips2} and \eqref{e:ham_ips3} has a strong solution on $[0,T]$ in the usual SDE sense. Further, we assume that the mean field limit of \eqref{e:ham_ips2} and \eqref{e:ham_ips3} exists in the sense of \cite{AD95}.
This means:
\begin{enumerate}[\up (1)]
    \item 
    The empirical mean $\by{1}{N}\sum_{j=1}^N\xi^j$ converges in probability to a smooth curve $u: [0,T]\to\gu_0$ as $N\to\infty$. That is, $u=u(t,x)=u_t(x)$ is a time dependent vector field.
    \item
    The stochastic process $(g^1_t,\xi^1_t)$ converges in the appropriate norm, for $N\to\infty$, to a stochastic process $(g_t,\xi_t)$, and the limiting process solves 
   \begin{align}
    \label{e:ham_mf1}
    \delta g_t
    &= TR^{g_t}\Big(
        u_t \,\delta t
        + \nu\sum_{\alpha} X_{\alpha}\,\delta W^{\alpha} \Big)\\
    \label{e:ham_mf2}
    \delta \xi_t
    &= \ad^{\top}(u_t).\xi_t\,\delta t
    + \nu\sum_{\alpha}\ad^{\top}(X_{\alpha}).\xi_t\,\delta W^{\alpha} 
\end{align}
where $W^{\alpha} = W^{\alpha,1}$ is a sequence of independent Brownian motions. Since the particles are identical there is no loss in generality in considering the limit of $(g^1(t),\xi^1(t))$.
\item
$u_t = E[\xi_t]$.
\end{enumerate}

Equations~\eqref{e:ham_mf1} and \eqref{e:ham_mf2} are mean field SDEs and their solutions are to be understood in the sense of Definition~\ref{def:mf_sol}. 

\begin{definition}[Strong solution of a mean field equation]\label{def:mf_sol}
Let $V$ be a Hilbert space and $W$ a cylindrical Wiener process in $V$. 
A $V$-valued stochastic process $\xi_t$ with $t\in[0,T]$ is a strong solution to the mean-field or McKean-Vlasov Stratonovich SDE 
\[
 \delta \xi_t = f(\xi_t,\mu_t)\,\delta t + g(\xi_t)\,\delta W_t
\] 
if
\begin{enumerate}[\up (1)]
\item
$\mu_t$ is the law of $\xi_t$, i.e., $\mu_t = (\xi_t)_*P$,
\item
$\xi_t$ is adapted to $(\F_t)$,
\item
$t\mapsto\xi_t$ is continuous $P$-a.s.,
\item
$\xi_t 
 = \xi_0 + \int_0^t f(\xi_s,\mu_s) \,ds 
   + \int_0^t g(\xi_s)\,\delta W_s $ 
for all $t\in[0,T]$ $P$-a.s.
\end{enumerate}
\end{definition}

Notice that, once the prescription $t\mapsto\mu_t$ of the law of $\xi_t$ is found, the concept of a mean field Stratonovich SDE is not different from that of a time dependent Stratonovich SDE. The Stratonovich integral above is hence to be understood as 
\[
 \int_0^t g(\xi_s)\,\delta W_s
 = \int_0^t g(\xi_s)\, d W_s
 + \by{1}{2}[g(\xi),W]_t.
\]
See e.g.\ \cite[Page~82]{Pro}.

Equations~\eqref{e:ham_mf1} and \eqref{e:ham_mf2} are defined on the phase space $G_0\times\gu_0$. They are Hamiltonian with respect to the symplectic structure \eqref{e:sp_form}: Consider the time dependent Hamiltonian
$H^t(g,\xi) = \ww<\xi,u_t>$ and the associated time dependent Hamiltonian vector field $X_{H^t}$ given by $i(X_{H^t})\Om = dH^t$. Equations~\eqref{e:ham_mf1} and \eqref{e:ham_mf2} are thus equivalent to 
\begin{equation}
    \label{e:ham_mf3}
  \delta\Gamma_t = X_{H^t}(\Gamma_t)\,\delta t
  + \sum_{\alpha}X_{F^{\alpha}}(\Gamma_t)\,\delta W^{\alpha}_t 
\end{equation}
where $\Gamma_t = (g_t,\xi_t)$ and $F^{\alpha}(g,\xi) = \ww<\xi,X_{\alpha}>$. 

By right invariance of the Hamiltonian system~\eqref{e:ham_mf1} and \eqref{e:ham_mf2}, we can pass to $\gu_0$, which is the Lie-Poisson reduction of $G_0\times\gu_0$.
Therefore, using the expression for $\ad^{\top}$ in Lemma~\ref{lem:I}, we consider, for $\eta>0$ and $\nu=\sqrt{\by{2\eta}{c^s}}$,  
the mean-field Stratonovich equation in $\gu_0$
\begin{align}\label{e:sdeP}
 \delta \xi_t 
 &= - P\Big(\nabla_{u_t}\xi_t + u_t'\otimes\xi_t \Big)\,\delta t
    - \nu \sum P\Big(\nabla_{X_{\alpha}}\xi_t + X_{\alpha}'\otimes\xi_t     
    \Big) \,\delta W_t^{\alpha}\\
 u_t &= E[\xi_t] \notag \\
  \xi_0 &= u_0  \notag 
\end{align}
where $u_0\in\gu_0$ is the initial condition. 
%Note that this is an equation in $\gu_0$. Solutions are divergence free.  
In Proposition~\ref{prop:KC}, Equation~\eqref{e:ham_mf1} will be used as a reconstruction equation to pass again to the full phase space $G_0\times\gu_0$. 
\revise{
\begin{remark}
The stochastic velocity along a path of the fluid motion is given, in the Euler picture, by Equation~\eqref{e:ham_mf1} as $(TR^{g_t})^{-1}\delta g_t$. Since $(TR^{g_t})^{-1}\delta g_t$ and $\xi_t$ do not coincide, we refer to the latter as \emph{stochastic specific momentum in velocity space} or, more briefly, as \emph{stochastic momentum}. This terminology is justified, because $\check{\mu}(\xi_t) = \rho\check{\mu}(\xi_t)\in\gu_0^*$ is the momentum density, where $\check{\mu}$ is the inertia tensor~\eqref{equ:I} and $\rho=1$ is the fluid density. See also \cite[Remark~5]{CFH17}.
The distinction between velocity and momentum in velocity space is necessary, because the vector valued Hamiltonian, $(H^t,F^{\alpha})$ in \eqref{e:ham_mf3}, is not the kinetic energy Hamiltonian associated to $\mu$.  
\end{remark}
}

\subsection{Hamiltonian mean field system for the Navier-Stokes equation}
For $x\in M$ we define the (point) evaluation map by  $\ev_x: \gu\to T_x M = \R^n$, $\xi\mapsto\xi(x)$. 
%Since $\gu^s$ is a space of equivalence classes of measurable functions one could repeat the usual comments about this map being defined on a representative. 
%Also, $\ev_x$ is defined on each $\gu$ in the same way, irrespectively of the smoothness class $H^s$.

\begin{lemma}\label{lem:1}
Consider $\hat{X_{\alpha}}: \gu^s\to\gu^{s-1}$, $\xi\mapsto \nabla_{X_{\alpha}}\xi + X_{\alpha}'\otimes\xi$
and 
$\hat{Y_{\alpha}}: \gu^s\to\gu_0^{s-1}$, $\xi\mapsto P(\nabla_{X_{\alpha}}\xi + X_{\alpha}'\otimes\xi)$. Then
\[
 \sum\hat{X_{\alpha}}\hat{X_{\alpha}}(\ev_x)(\xi) 
 = c^s\Delta\xi(x)
\textup{ and }
 \sum\hat{Y_{\alpha}}\hat{Y_{\alpha}}(\ev_x)(\xi) 
 = c^s\Delta P \xi(x) .
\]
\end{lemma}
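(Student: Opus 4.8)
The plan is to compute the iterated operators $\hat{X_{\alpha}}\hat{X_{\alpha}}$ and $\hat{Y_{\alpha}}\hat{Y_{\alpha}}$ applied to the evaluation functional $\ev_x$ and then sum over $\alpha$, exploiting the two structural identities \eqref{e:nablaXX} and \eqref{e:Delta} established for the frame $(X_{\alpha})$. The key simplification will come from the fact that $\nabla_{X_{\alpha}}X_{\alpha}=0$, which means the first-order differential operator $\hat{X_{\alpha}}$ acts in a way where the second application produces only the genuinely second-order term plus cross terms that either vanish or reassemble into the Laplacian via \eqref{e:Delta}.

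First I would write out $\hat{X_{\alpha}}(\ev_x)(\xi) = \ev_x(\hat{X_{\alpha}}\xi) = (\nabla_{X_{\alpha}}\xi + X_{\alpha}'\otimes\xi)(x)$, treating $\hat{X_{\alpha}}$ as an operator on $\gu^s$ and $\ev_x$ as a linear functional composed after it. Applying $\hat{X_{\alpha}}$ a second time I get $(\nabla_{X_{\alpha}} + X_{\alpha}'\otimes)(\nabla_{X_{\alpha}}\xi + X_{\alpha}'\otimes\xi)(x)$. Expanding by the Leibniz rule yields four groups of terms: a pure second-derivative term $\nabla_{X_{\alpha}}\nabla_{X_{\alpha}}\xi$, two mixed first-order terms involving $\nabla_{X_{\alpha}}(X_{\alpha}'\otimes\xi)$ and $X_{\alpha}'\otimes\nabla_{X_{\alpha}}\xi$, and a zeroth-order term $X_{\alpha}'\otimes(X_{\alpha}'\otimes\xi)$. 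The crucial point is that the mixed and zeroth-order contributions carry a factor of $X_{\alpha}'$ (the spatial derivative of the frame field), and upon summation over $\alpha$ these must cancel or vanish. For the terms involving $\nabla_{X_{\alpha}}(X_{\alpha}')$ I would invoke \eqref{e:nablaXX}, since $\nabla_{X_{\alpha}}X_{\alpha}=0$ controls how $X_{\alpha}'$ interacts with $X_{\alpha}$; the remaining summed contributions should vanish by the oddness/orthogonality structure of the trigonometric frame from \cite[Appendix]{CS09}. What survives after summation is precisely $\sum\nabla_{X_{\alpha}}\nabla_{X_{\alpha}}\xi = c^s\Delta\xi$, giving the first identity.

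For the second identity the argument is structurally identical, with $\hat{Y_{\alpha}} = P\circ\hat{X_{\alpha}}$ replacing $\hat{X_{\alpha}}$. Here I would use that $P$ is a bounded linear projection commuting appropriately with the summation, so that $\sum\hat{Y_{\alpha}}\hat{Y_{\alpha}}\xi = \sum P\hat{X_{\alpha}}P\hat{X_{\alpha}}\xi$. The projection $P$ annihilates gradient parts, and since $\sum\nabla_{X_{\alpha}}\nabla_{X_{\alpha}}\xi = c^s\Delta\xi$ by \eqref{e:Delta}, applying $P$ to the reassembled expression gives $c^s\Delta P\xi$ once one checks that $P$ commutes with $\Delta$ on the torus (both are Fourier multipliers) and that the intermediate insertion of $P$ does not obstruct the cancellation of the lower-order terms.

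The main obstacle I anticipate is the careful bookkeeping of the cross terms and the zeroth-order $X_{\alpha}'\otimes X_{\alpha}'\otimes\xi$ term under summation: one must verify these genuinely vanish rather than contribute to the constant $c^s$, and this requires using the explicit cosine/sine structure of $A_{(k,i)}, B_{(k,i)}$ so that the $A$ and $B$ contributions pair up and cancel the first-order and algebraic terms by parity in $\langle k,x\rangle$. A secondary subtlety is the interleaved projection $P$ in the second identity: one must confirm that inserting $P$ between the two copies of $\hat{X_{\alpha}}$ does not spoil the cancellations, which follows because $P$ is a Fourier multiplier diagonal in the same trigonometric basis and hence commutes with the relevant sums and with $\Delta$.
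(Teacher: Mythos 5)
For the first identity your route is the paper's own: expand $\hat{X_{\alpha}}\hat{X_{\alpha}}$ by Leibniz into the four groups of terms you list, get $c^s\Delta\xi$ from $\sum\nabla_{X_{\alpha}}\nabla_{X_{\alpha}}\xi$ via \eqref{e:Delta}, and kill the rest using the explicit trigonometric structure. The paper's bookkeeping runs through $A_{(k,i)}' = -B_{(k,i)}\otimes k$ and $B_{(k,i)}' = A_{(k,i)}\otimes k$: the two mixed first-order groups cancel by the antisymmetric $A\leftrightarrow B$ pairing together with \eqref{e:nablaXX}, while the zeroth-order term $X_{\alpha}'\otimes X_{\alpha}'\otimes\xi$ vanishes pointwise for each fixed $\alpha$ because $k^{t}k_i^{\bot}=0$ --- a pointwise orthogonality rather than a parity cancellation, but your sketch correctly anticipates the needed ingredients.

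The second identity is where your proposal has a genuine gap. You justify the harmlessness of the interleaved projection by asserting that $P$ ``is a Fourier multiplier diagonal in the same trigonometric basis and hence commutes with the relevant sums and with $\Delta$''. $P$ does commute with $\Delta$, but that is not the issue: you must compare $\sum P\hat{X_{\alpha}}P\hat{X_{\alpha}}\xi$ with $\sum P\hat{X_{\alpha}}\hat{X_{\alpha}}\xi$, and the difference is $-\sum P\hat{X_{\alpha}}(\nabla f_{\alpha}^{\xi})$ with $\nabla f_{\alpha}^{\xi}=(1-P)\hat{X_{\alpha}}(\xi)$. Since $\hat{X_{\alpha}}$ involves multiplication by $\cos\vv<k,x>$ and $\sin\vv<k,x>$, it is \emph{not} a Fourier multiplier and does not commute with $P$, so no commutation argument closes this. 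What actually kills the correction is a structural property of $\hat{X_{\alpha}}$ itself: precisely because of the line-stretching term, $\hat{X_{\alpha}}(\nabla f)=\nabla_{X_{\alpha}}\nabla f+X_{\alpha}'\otimes\nabla f=\nabla(\nabla_{X_{\alpha}}f)$ is again a full gradient (equivalently, $\hat{X_{\alpha}}$ is the Lie derivative acting on one-forms and therefore preserves exact forms), so the outer $P$ annihilates it. This gives $P\hat{X_{\alpha}}P=P\hat{X_{\alpha}}$ on $\gu^s$, and the second identity then follows from the first together with $P\Delta=\Delta P$. This observation is exactly the last display of the paper's proof; your argument needs it, and the Fourier-multiplier justification should be replaced by it.
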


Note that  $\hat{Y_{\alpha}}(\xi) = -\ad^{\top}(X_{\alpha}).\xi$ for $\xi\in\gu_0$. 

\begin{proof}
The directional derivative of $\ev_x$ along $\hat{X_{\alpha}}$ is not necessarily well-defined, since the latter is not a proper vector field on $\gu^s$ but takes values in $\gu^{s-1}$. However, since $\ev_x$ is linear, the G\^ateaux derivative is
\[
 \hat{X_{\alpha}}(\ev_x)(\xi) 
 = d\,\ev_x(\hat{X_{\alpha}})(\xi)
 = \ddo\ev_x(\xi + t\hat{X_{\alpha}}(\xi))
 = \hat{X_{\alpha}}(\xi)(x)
\]
which clearly exists.
Similarly,
\begin{align*}
    \sum\hat{X_{\alpha}}\hat{X_{\alpha}}(\ev_x)(\xi)
    %&= \sum d\big(\eta\mapsto\nabla_{X_{\alpha}}\eta(x)\big)(\xi).\hat{X_{\alpha}}(\xi)\\
    %&= \sum\ddo\nabla_{X_{\alpha}}\big(\xi+t\hat{X_{\alpha}}(\xi)\big)(x)\\
    &= \ev_x\sum\Big(
        \nabla_{X_{\alpha}}\nabla_{X_{\alpha}}\xi
        + \nabla_{X_{\alpha}}(X_{\alpha}'\otimes\xi)
        + X_{\alpha}'\otimes\nabla_{X_{\alpha}}\xi
        + X_{\alpha}'\otimes X_{\alpha}'\otimes\xi 
        \Big)
    %&= c^s\Delta\xi(x) plus other terms
\end{align*}
By \eqref{e:Delta} the first term becomes $\sum\nabla_{X_{\alpha}}\nabla_{X_{\alpha}}\xi = c^s\Delta\xi$.
The other terms cancel.
Indeed, for $\alpha=(k,i,0)$ this is clear.
Note that 
\begin{align}
\label{e:A'}
    A_{(k,i)}' &= -B_{(k,i)}\otimes k \textup{ and }
    B_{(k,i)}' = A_{(k,i)}\otimes k. 
\end{align}
Let us consider $\alpha=(k,i,1)$, the case $\alpha=(k,i,2)$ being analogous.
Now,
\[
 \nabla_{A_{(k,i)}}(A^{\revise{'}}_{(k,i)}\otimes\xi)
 = \nabla_{A_{(k,i)}}(-\vv<k,\xi>B_{(k,i)})
 = -\vv<k,\xi>\nabla_{A_{(k,i)}}B_{(k,i)} 
   - \vv<k,\nabla_{A_{(k,i)}}\xi> B_{(k,i)}
\]
where the first term vanishes because of \eqref{e:nablaXX} and the second term cancels in the sum because of the asymmetric sign change in \eqref{e:A'}.
We have thus shown that
\begin{equation}
    \label{e:alpha_sum}
    \sum \nabla_{X_{\alpha}}(X_{\alpha}'\otimes\xi) = 0.
\end{equation}
The terms 
\[
 A_{(k,i)}'\otimes\nabla_{A_{(k,i)}}\xi
 \textup{ and }
 B_{(k,i)}'\otimes\nabla_{B_{(k,i)}}\xi
 = -A_{(k,i)}'\otimes\nabla_{A_{(k,i)}}\xi 
\]
also cancel in the sum. Finally,
\begin{align*}
    A_{(k,i)}'\otimes A_{(k,i)}'\otimes\xi 
    &= A_{(k,i)}'\otimes(-B_{(k,i)}\otimes k\otimes\xi)
    = \frac{1}{|k|^{2s+2}}\sin^2\vv<k,x> k_i^{\bot} k^t k_i^{\bot} k^t \xi
    = 0
\end{align*}
since $k^t k_i^{\bot} = 0$ by construction,
where $k^t$ is the transpose of $k$ . 
For the second statement we note that 
\begin{align*}
    &\sum\hat{Y_{\alpha}}\hat{Y_{\alpha}}(\ev_x)
    = \ev_x\circ P \Big(\hat{X_{\alpha}}(\hat{X_{\alpha}}(\xi)) 
        - \hat{X_{\alpha}}(\nabla f_{\alpha}^{\xi})
        \Big)\\
    &= \ev_x\circ P\Big(
        c^s\Delta\xi
        - \nabla_{X_{\alpha}}(\nabla f_{\alpha}^{\xi})
        - X_{\alpha}'\otimes\nabla f_{\alpha}^{\xi}  
        \Big)
    = \ev_x\circ P\Big(
        c^s\Delta\xi
        - \nabla \nabla_{X_{\alpha}}f_{\alpha}^{\xi}
        \Big)
    = c^s\Delta P\xi (x)
\end{align*}
where $f_{\alpha}^{\xi} = \Delta^{-1}\d(\hat{X_{\alpha}}(\xi))$.
\end{proof}

\begin{theorem}\label{thm:main}
\begin{enumerate}[\up (1)]
    \item 
    If $\xi_t\in\gu_0$ is a strong solution to \eqref{e:sdeP} on $[0,T]$ such that $\xi_0=u_0\in\gu_0$, then $u(t,x) = E[\ev_x(\xi_t)]$ satisfies the Navier-Stokes equation with $u(0,x)=u_0(x)$ for incompressible flow in $M$ on $[0,T]$:
\begin{align}\label{e:nse}
    \dd{t}{} u &= -P\nabla_u u + \eta\Delta u \textup{ and }
    \d\,u = 0
\end{align}
\item
Conversely,
suppose  $u$ is a smooth solution of \eqref{e:nse} on $[0,T]$ with initial condition $u_0$, and $g^u$ is a strong solution of \eqref{e:ham_mf2} on $[0,T]$ with $g^u_0=e$. Let $\zeta_t := Ad^{\top}(g^u_t).u_0$ where $\Ad^{\top}$ is defined in \eqref{e:mech_con}.
It follows that $u=E[\zeta]$, and $(g_t^u, \zeta_t)$ is a solution of \eqref{e:ham_mf3}.
In particular, $\zeta_t$ solves the mean field system \eqref{e:sdeP}. 
\end{enumerate}
\end{theorem}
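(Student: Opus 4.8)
The plan is to prove the two parts separately, in both cases using the point evaluation map $\ev_x$ to reduce the infinite-dimensional stochastic calculus to ordinary It\^o calculus in $\R^n$ for each fixed $x$, with Lemma~\ref{lem:1} supplying the crucial second-order term.

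For part (1) I would first apply $\ev_x$ to \eqref{e:sdeP}. Since $\ev_x$ is linear and the drift coefficient $u_t=E[\xi_t]$ is deterministic, the evaluated equation is a Stratonovich SDE in $\R^n$ whose diffusion operators are the $\hat{Y_{\alpha}}$ of Lemma~\ref{lem:1}. Converting from Stratonovich to It\^o produces the correction $\tfrac12\nu^2\sum\hat{Y_{\alpha}}\hat{Y_{\alpha}}(\ev_x)(\xi_t)$, which by Lemma~\ref{lem:1} equals $\tfrac12\nu^2 c^s\Delta P\xi_t(x)=\tfrac12\nu^2 c^s\Delta\xi_t(x)$ because $\xi_t\in\gu_0$ is divergence free; with $\nu=\sqrt{2\eta/c^s}$ this is exactly $\eta\Delta\xi_t(x)$. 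Taking expectations annihilates the It\^o (martingale) integral, and the drift $E[\ad^{\top}(u_t).\xi_t]$ can be pulled through the expectation because it is linear in $\xi_t$ with deterministic coefficient $u_t$; using $E[\xi_t]=u_t$ and Lemma~\ref{lem:I} it becomes $\ad^{\top}(u_t).u_t=-P(\nabla_{u_t}u_t+u_t'\otimes u_t)$. Since $u_t'\otimes u_t=\tfrac12\nabla|u_t|^2$ is a gradient, $P(u_t'\otimes u_t)=0$, so the drift contributes $-P\nabla_{u_t}u_t$. Collecting terms gives $\dd{t}{}u=-P\nabla_u u+\eta\Delta u$; moreover $\d\,u=0$ since $u=E[\xi]$ is an expectation of divergence free fields, and $u(0,\cdot)=u_0$ because $\xi_0=u_0$ is deterministic.

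For part (2) the first step is to recognise, via the definition $A=\check{\mu}^{-1}\circ J$ of the mechanical connection \eqref{e:mech_con}, that the prescription $\zeta_t=\Ad^{\top}(g^u_t).u_0$ is equivalent to the conservation law $J(g^u_t,\zeta_t)=\Ad(g^u_t)^*[\zeta_t^{\flat}]=[u_0^{\flat}]$, i.e.\ the momentum map \eqref{e:momap} is frozen at its initial value (in particular $\zeta_0=u_0$). I would then differentiate this law with the Stratonovich chain rule, which obeys ordinary calculus. From \eqref{e:ham_mf1} the right-trivialised velocity is $\delta\lambda_t:=(TR^{g^u_t})^{-1}\delta g^u_t=u_t\,\delta t+\nu\sum_{\alpha}X_{\alpha}\,\delta W^{\alpha}$; using the group-theoretic variation formula $\delta[\Ad(g^u_t)^*]=\Ad(g^u_t)^*\circ\ad(\delta\lambda_t)^*$, differentiating $\Ad(g^u_t)^*[\zeta_t^{\flat}]=[u_0^{\flat}]$ and cancelling the invertible factor $\Ad(g^u_t)^*$ yields $\delta[\zeta_t^{\flat}]=\ad^*(\delta\lambda_t).[\zeta_t^{\flat}]$. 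Applying $\check{\mu}^{-1}$ together with the intertwining $\check{\mu}\circ\ad^{\top}=\ad^*\circ\check{\mu}$ (read off from the two formulas in Lemma~\ref{lem:I}) gives precisely $\delta\zeta_t=\ad^{\top}(u_t).\zeta_t\,\delta t+\nu\sum_{\alpha}\ad^{\top}(X_{\alpha}).\zeta_t\,\delta W^{\alpha}$, which is \eqref{e:ham_mf2}. Hence $(g^u_t,\zeta_t)$ solves \eqref{e:ham_mf1}--\eqref{e:ham_mf2}, equivalently the Hamiltonian system \eqref{e:ham_mf3}.

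It remains to identify $u=E[\zeta]$ and conclude that $\zeta$ solves \eqref{e:sdeP}. Since the diffusion coefficients of \eqref{e:ham_mf2} are the same $\ad^{\top}(X_{\alpha})$ as in \eqref{e:sdeP}, the computation of part (1) applied to \eqref{e:ham_mf2} shows that $v_t:=E[\zeta_t]$ satisfies the linear equation $\dd{t}{}v_t=-P(\nabla_{u_t}v_t+u_t'\otimes v_t)+\eta\Delta v_t$ with $v_0=u_0$, where the given Navier--Stokes solution $u_t$ enters only as a frozen coefficient. But $u_t$ solves this same equation: rewriting \eqref{e:nse} as $\dd{t}{}u_t=-P(\nabla_{u_t}u_t+u_t'\otimes u_t)+\eta\Delta u_t$ (again using $P(u_t'\otimes u_t)=0$) shows $w=u$ is a solution with the same initial datum. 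By uniqueness for this linear advection--diffusion equation, $E[\zeta]=u$, so the self-consistency constraint closes and $\zeta$ is a genuine solution of \eqref{e:sdeP}. The main obstacle I anticipate is making the Lie-theoretic chain-rule step of part (2) rigorous in the Sobolev setting: $\ad$, $\ad^{\top}$, $\ad^*$ and the pullback in $J$ each cost a derivative, so $\zeta_t=\Ad^{\top}(g^u_t).u_0$ and its differential must be controlled in $H^{s-1}$, and the formal identity $\delta[\Ad(g^u_t)^*]=\Ad(g^u_t)^*\circ\ad(\delta\lambda_t)^*$ must be justified as a Stratonovich differential on the diffeomorphism group rather than a naive finite-dimensional manipulation; a secondary point is the uniqueness used to identify $E[\zeta]=u$, along with the interchange of expectation with $P$, $\nabla$ and $\Delta$, which should follow from the assumed smoothness.
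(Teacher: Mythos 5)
Your proposal is correct and follows essentially the same route as the paper: part (1) via the Stratonovich--It\^o correction computed through Lemma~\ref{lem:1} (giving $\eta\Delta$), expectation, and $P(u'\otimes u)=0$; part (2) via differentiating the frozen momentum map/mechanical connection to obtain \eqref{e:ham_mf2} and then identifying $E[\zeta]=u$ through the linear equation satisfied by $v=E[\zeta]$. The only substantive difference is that where you invoke ``uniqueness for this linear advection--diffusion equation,'' the paper proves it explicitly with an $L^2$ energy estimate on $w=v-u$ and the Gronwall lemma, which is exactly the justification you flagged as needed.
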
 

In Part (1), Equation~\eqref{e:nse} is thus the McKean-Vlasov equation for $\ev_x: \gu_{\revise{0}}\to\R^n$ corresponding to the non-linear generator of $\xi_t$.
In Part (2),
we regard $g^u_t$ as a \emph{Gaussian perturbation} of the Lagrangian trajectory defined by $u$. This is justified by Proposition~\ref{prop:bm}.  
 
\begin{proof} 
Ad (1).
Note that $\d\,u_t = E[\d\,\xi_t] = 0$.
Let us write the Ito version of \eqref{e:sdeP} as 
$d\xi_t = b(u_t,\xi_t)\,dt - \nu\sum\hat{Y_{\alpha}}(\xi_t)\,dW_t^{\alpha}$ where we use the notation from Lemma~\ref{lem:1}.
By linearity of $\hat{Y_{\alpha}}$ and Lemma~\ref{lem:1}, we have for the quadratic variation
\begin{align*}
\sum\Big[-\nu\hat{Y_{\alpha}}(\xi_.), W^{\alpha}_. \Big]_t 
&= -\nu\sum\Big[
    \int_0^{\cdot}\hat{Y_{\alpha}}(b(u_s,\xi_s))\, ds 
    -
    \nu\int_0^{\cdot}\hat{Y_{\alpha}}(\hat{Y_{\alpha}}(\xi_s))\,dW_s^{\beta},
    W^{\alpha}_.
 \Big]_t \\
& = 
\nu^2\sum\delta_{\alpha\beta}\int_0^t \hat{Y_{\alpha}}(\hat{Y_{\alpha}}(\xi_s)) \;ds 
= 
c^s\nu^2\int_0^t \Delta\xi_s\, ds
\end{align*}
where $\delta_{\alpha\beta}$ is the Kronecker delta.
(See \cite{Pro} or \cite[Section~3.4]{DZ}.) 
Therefore, the Ito version of \eqref{e:sdeP} is
\begin{equation}
\label{e:sdeP-Ito}
d \xi_t 
 = \Big(-P(\nabla_{u_t}\xi_t + u_t'\otimes\xi_t)
         + c^s\by{\nu^2}{2}\Delta\xi_t \Big)\,dt
    - \nu \sum P\Big(\nabla_{X_{\alpha}}\xi_t + X_{\alpha}'\otimes\xi\Big) \, dW_t^{\alpha}.
\end{equation}
Since $\eta = c^s\by{\nu^2}{2}$, $P(u'\otimes u) = \by{1}{2}P\nabla\vv<u,u> = 0$, and by linearity of $\xi\mapsto b(u,\xi)$, it follows that
\[
 u_t = u_0 
     + E\Big[\int_0^t\Big(
        - P(\nabla_{u_s}\xi_s 
        + u_s'\otimes\xi_s )
        + \eta\Delta\xi_s 
        \Big)\,ds \Big]
    = u_0 + \int_0^t\Big(-P\nabla_{u_s}u_s + \eta\Delta u_s\Big)\,ds 
\] 
solves the Navier-Stokes equation on $[0,T]$. 

Ad (2). 
Using the notation from \eqref{e:mech_con}, we obtain $A(g_t^u,\zeta_t) = \Ad^{\top}((g^u_t)^{-1}).\Ad(g^u_t).u_0 = u_0$. Therefore, 
\begin{align}
%    \label{e:zeta0}
    0 
    &= \delta \ww<\Ad((g_t^u)^{-1}).\zeta_t,\eta> 
    = \ww<\delta \zeta_t, \Ad(g_t^u).\eta> + \ww< \zeta_t, \delta\Ad(g_t^u).\eta> \notag \\
    &=\ww<\delta \zeta_t, \Ad(g_t^u).\eta> 
        + \ww<\zeta_t, \ad(u_t\,\delta t + \sum X_{\alpha}\,\delta W_t^{\alpha}).\Ad(g_t^u).\eta>  \notag \\
    &=\ww<\Ad^{\top}((g^u_t)^{-1}).
        \Big(\delta \zeta_t - \ad^{\top}(u_t\,\delta t + \sum X_{\alpha}\,\delta W_t^{\alpha}).\zeta_t\Big) , \eta >
        \label{e:proof-main}
\end{align}
for arbitrary $\eta\in\gu_0$,
%Equation \eqref{e:ham_mf1} yields
%\begin{align}\label{e:zeta1}
%    \delta\zeta_t
%    &= \ad^{\top}(u_t\,\delta t + \sum X_{\alpha}\,\delta W_t^{\alpha}).\Ad^{\top}(g_t^u).u_0
%    = \ad^{\top}(u_t).\zeta_t\,\delta t 
%        + \sum\ad^{\top}(X_{\alpha}).\zeta_t\,\delta W_t^{\alpha}
%\end{align}
whence $\zeta_t$ satisfies \eqref{e:ham_mf2}.
It remains to show that $E[\zeta_t] = u_t$.
Let $v_t = E[\zeta_t]$ and $w = v-u$. 
Since  \eqref{e:ham_mf2} implies \eqref{e:sdeP-Ito}, it follows that 
\begin{equation}
\label{e:v1}
 \dd{t}{}v 
 =  -P(\nabla_u v + u'\otimes v) + \eta\Delta v.     
\end{equation}
Notice that $\d\,v = \d\,w = 0$. Subtracting the Navier Stokes equation \eqref{e:nse} from \eqref{e:v1} yields \eqref{e:v1} with $v$ replaced by $w$. Therefore, we have the energy identity
\begin{equation}
\label{e:v2}
 \dd{t}{}\by{1}{2}\ww<w,w>
 = \ww<-\nabla_u w - u'\otimes w + \eta\Delta w , w>
 = -\ww<u'\otimes w, w> - \eta\ww<\nabla w, \nabla w> 
\end{equation}
because $\ww<\nabla_u w, w> = 0$ by partial integration. 
This gives 
\[
 \dd{t}{}||w_t||_2^2
 \le
 2|\ww<u_t'\otimes w_t, w_t>|
 \le 
 2\sup_{0\le s\le T}|u'_s|_{\infty}\cdot ||w_t||_2^2
\]
where $|u'_s|_{\infty}$ is the supremum norm and $||\revise{w}||_2^2 = \ww<w,w>$ is the $L^2$-norm. 
The Gronwall Lemma now implies that $||w_t||_2^2 = ||w_0||_2^2 = 0$, whence $v=u$. 
\end{proof}

\begin{corollary}\label{cor:main}
Let $\Phi: C^1([0,T],\gu_0)\to C^1([0,T],\gu_0)$ be defined by $\Phi: u\mapsto E[\Ad^{\top}(g^u).u_0]$. 
Then $u$ is a solution to the Navier-Stokes Equation~\eqref{e:nse} if and only if $u$ is a fixed point of $\Phi$. 
\end{corollary}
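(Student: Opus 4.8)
The plan is to read both implications off the machinery already assembled in the proof of Theorem~\ref{thm:main}, part~(2). Writing $\zeta_t = \Ad^{\top}(g^u_t).u_0$ as in that proof, the map in question is exactly $\Phi(u) = E[\zeta]$. First I would record that $\Phi$ is well defined on $C^1([0,T],\gu_0)$: the reconstruction equation \eqref{e:ham_mf1} produces $g^u$ from the input $u$, the element $\zeta_t$ lies in $\gu_0$ because $\Ad^{\top}(\cdot)$ takes values in $\gu_0$ by \eqref{e:mech_con}, and the standing existence and regularity hypotheses ensure $t\mapsto E[\zeta_t]$ again belongs to $C^1([0,T],\gu_0)$.

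The forward implication is immediate. If $u$ solves the Navier--Stokes equation \eqref{e:nse}, then Theorem~\ref{thm:main}(2) asserts precisely that $u = E[\zeta] = E[\Ad^{\top}(g^u).u_0]$, that is $\Phi(u) = u$, so $u$ is a fixed point.

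For the converse the key observation is that the derivation \eqref{e:proof-main} in the proof of Theorem~\ref{thm:main}(2) — which shows that $\zeta_t$ satisfies \eqref{e:ham_mf2}, and hence that $v := E[\zeta] = \Phi(u)$ obeys the linear equation \eqref{e:v1} — never uses that $u$ solves \eqref{e:nse}; it only uses the definition of $\zeta$ together with the reconstruction equation \eqref{e:ham_mf1} for $g^u$. Consequently, for an \emph{arbitrary} input $u$ the function $v = \Phi(u)$ satisfies
\[
 \dd{t}{}v = -P(\nabla_u v + u'\otimes v) + \eta\Delta v, \qquad \d\,v = 0 .
\]
Now suppose $u$ is a fixed point, so that $v = u$. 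Substituting $v = u$ and using $P(u'\otimes u) = \by{1}{2}P\nabla\vv<u,u> = 0$ collapses this to $\dd{t}{}u = -P\nabla_u u + \eta\Delta u$ with $\d\,u = 0$, which is exactly \eqref{e:nse}. Hence $u$ solves the Navier--Stokes equation.

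The only genuine obstacle is the decoupling in the converse: one must verify that \eqref{e:v1} holds for $v = \Phi(u)$ and every input $u$, with the quasilinear Navier--Stokes structure emerging only when the fixed-point condition $v = u$ folds the linear advection--diffusion equation driven by $u$ back onto itself. The cancellation $P(u'\otimes u) = 0$ is what removes the pressure-type term and reconciles $-P(\nabla_u u + u'\otimes u)$ with the Navier--Stokes nonlinearity $-P\nabla_u u$; everything else is bookkeeping inherited from Theorem~\ref{thm:main}.
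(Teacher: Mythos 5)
Your proof is correct and follows essentially the same route as the paper: the forward direction is a direct citation of Theorem~\ref{thm:main}(2), and the converse rests on the same key observation that the derivation \eqref{e:proof-main} of the SDE for $\zeta_t=\Ad^{\top}(g^u_t).u_0$ never uses that $u$ solves \eqref{e:nse}, so the fixed-point condition $u=E[\zeta]$ is exactly the mean-field closure. The only cosmetic difference is that the paper concludes the converse by noting $\zeta$ then solves \eqref{e:sdeP} and invoking Theorem~\ref{thm:main}(1) as a black box, whereas you unfold that step explicitly through the linear equation \eqref{e:v1} and the cancellation $P(u'\otimes u)=0$.
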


\begin{proof}
If $u$ is a solution to \eqref{e:nse} then $u_t = E[\Ad^{\top}(g_t^u).u_0]$ by Theorem~\eqref{thm:main}(2). Conversely, if $\xi_t := \Ad^{\top}(g_t^u).u_0$ and $u_t = E[\xi_t]$ then $\xi_t$ satisfies the mean field Equation~\eqref{e:sdeP}, whence $u$ is a solution of \eqref{e:nse} by Theorem~\eqref{thm:main}(1).
\end{proof}

%The fact that Lemma~\ref{lem:1} holds (so that the quadratic variation yields the Laplacian) is not obvious at first sight and follows from the special form of the $X_{\alpha}$.  

\subsection{Kelvin Circulation Theorem}
%Equations~\eqref{e:ham_mf1}and \eqref{e:ham_mf2} are Hamiltonian with respect Hamiltonian functions that are invariant under particle relabeling action, which is the right action by $G_0$ on itself. Thus we expect a conservation law to hold. This conservation law implies the Kelvin Circulation Theorem. 

\begin{proposition}\label{prop:KC}
Suppose $\xi_t$ is a solution of \eqref{e:sdeP} with $u_t=E[\xi_t]$.
Define $g_t$ in $G_0$ through the reconstruction Equation~\eqref{e:ham_mf1}, that is 
\begin{equation}
\label{e:del_gamma}
    \delta g_t 
    = \Big(u_t\delta t + \sum X_{\alpha} \delta W_t^{\alpha}\Big)\circ g_t.
\end{equation}
Then, for any closed smooth curve $C$ in $M$, 
\begin{equation}
\label{e:KC}
    \delta \int_{( g_t)_*C}\xi_t^{\flat}
    = 0.
\end{equation}
\end{proposition}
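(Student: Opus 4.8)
The plan is to recognize the pair $(g_t,\xi_t)$ as a solution of the full Hamiltonian system on $G_0\times\gu_0$ and then to invoke the general stochastic Kelvin Circulation Theorem, Proposition~\ref{prop:kct_gen}.

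First I would observe, using Lemma~\ref{lem:I}, that the operators appearing in \eqref{e:sdeP} are precisely $\ad^{\top}$: one has $\ad^{\top}(u_t).\xi_t = -P(\nabla_{u_t}\xi_t + u_t'\otimes\xi_t)$ and $\ad^{\top}(X_{\alpha}).\xi_t = -P(\nabla_{X_{\alpha}}\xi_t + X_{\alpha}'\otimes\xi_t)$. Hence the equation \eqref{e:sdeP} satisfied by $\xi_t$ is nothing but \eqref{e:ham_mf2}. Since $g_t$ is, by hypothesis, defined through the reconstruction equation \eqref{e:del_gamma}, which is \eqref{e:ham_mf1}, the pair $\Gamma_t=(g_t,\xi_t)$ solves the coupled system \eqref{e:ham_mf1}--\eqref{e:ham_mf2}. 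As recorded in the text, this system is equivalent to the Hamiltonian form \eqref{e:ham_mf3}, namely $\delta\Gamma_t = X_{H^t}(\Gamma_t)\,\delta t + \sum_{\alpha}X_{F^{\alpha}}(\Gamma_t)\,\delta W^{\alpha}_t$, with $H^t(g,\xi)=\ww<\xi,u_t>$ and $F^{\alpha}(g,\xi)=\ww<\xi,X_{\alpha}>$; the precise scaling of the noise coefficients by $\nu$ plays no role in what follows.

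Next I would verify the single hypothesis of Proposition~\ref{prop:kct_gen}, namely invariance of the Hamiltonian functions under the right $G_0$-action $TR^g:(k,\xi)\mapsto(kg,\xi)$. This is immediate, since neither $H^t$ nor $F^{\alpha}$ depends on the group variable $g$---each is built solely from the pairing of $\xi$ with a fixed element ($u_t$, respectively $X_{\alpha}$) of $\gu_0$. Applying Proposition~\ref{prop:kct_gen} with $h=H^t$ and the family $f^{\alpha}=F^{\alpha}$ (indexed by the multi-indices $\alpha$ labelling the $X_{\alpha}$) then delivers exactly $\delta\int_{(g_t)_*C}\xi_t^{\flat}=0$, which is \eqref{e:KC}.

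The only point demanding care is that $H^t$ is genuinely time-dependent, through the deterministic mean $u_t=E[\xi_t]$, whereas Proposition~\ref{prop:kct_gen} is phrased for time-independent Hamiltonians; I expect this to be the main---and only mild---obstacle. The resolution is that the proof of that proposition relies on the stochastic Noether Theorem, which conserves the momentum map $J$ as soon as the Hamiltonian is invariant at each instant. Invariance of $H^t$ holds for every fixed $t$ (the mean $u_t$ enters only through the pairing with $\xi$, never through $g$), so the conservation $\delta J(g_t,\xi_t)=0$ survives in the time-dependent setting, and the remaining steps---writing $\delta(\xi_t^{\flat}\circ\Ad(g_t))$ as a full differential $dp_t$ and integrating it over the closed loop $C$---proceed verbatim as in the proof of Proposition~\ref{prop:kct_gen}.
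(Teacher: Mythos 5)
Your proposal is correct and coincides with the paper's own primary argument: the paper's proof opens by observing that, since $H^t$ and $F^{\alpha}$ are invariant under the $G_0$-action, the claim follows immediately from Proposition~\ref{prop:kct_gen}, exactly as you argue (your extra care about the time-dependence of $H^t$ through $u_t=E[\xi_t]$ is a welcome addition that the paper leaves implicit). The paper then also supplies a second, self-contained computational verification---expanding $\delta\int_C g_t^*\xi_t^{\flat}$ and checking that the line-stretching terms $u_t'\otimes\xi_t$ and $X_{\alpha}'\otimes\xi_t$ cancel against the transport terms up to a full differential $(\nabla p_t)^{\flat}$---but only as an alternative to the route you take.
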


\begin{proof}
Since $H^t$ and $F^{\alpha}$ in \eqref{e:ham_mf3} are invariant under the $G_0$-action, this follows immediately from Proposition~\ref{prop:kct_gen}. However, we can also give an explicit proof: 
\begin{align}
\label{e:kct1}
    \delta \int_{( g_t)_*C}\xi_t^{\flat}
    &= \delta\int_C  g_t^*\xi_t^{\flat}
    = \delta\int_C(\xi_t\circ g_t)^{\flat}\circ T g_t 
    = \delta\int_0^1\vv<\xi_t\circ g_t,
        T g_t.c'(s)>\,ds \\
    &= \int_0^1\Big(\vv<\delta(\xi_t\circ g_t),T g_t.c'(s)>
        + \vv<\xi_t\circ g_t,(Tu_t\,\delta t + \sum TX_{\alpha}\,\delta W_t^{\alpha} )T g_t.c'(s)>
        \Big)\,ds  \notag \\
    &= \int_0^1 
        \vv<\Big(
        \delta\xi_t
        + \nabla_{u_t\,\delta t + \sum X_{\alpha}\,\delta W_t^{\alpha}}\xi_t
        + u_t'\otimes\xi_t \,\delta t
        + \sum X_{\alpha}'\otimes\xi_t \,\delta W_t^{\alpha} 
        \Big)\circ g_t
        , 
        T g_t.c'(s) > \,ds \notag \\
    &= \int_{( g_t)_*C} (\nabla p_t)^{\flat}
    = 0  \notag 
\end{align}
where $c(s)$ is a parametrization of $C$.
\end{proof}

Since 
\begin{equation}
    \label{e:line-stretching}
    u_t'\otimes\xi_t
    \textup{ and }
    X_{\alpha}'\otimes\xi_t
\end{equation}
are obtained by transposing $Tu_t$ and $TX_{\alpha}$, we refer to these as the \emph{line stretching terms}. Compare with \cite[Equ.~(2.37)]{CFH17}.

\section{Energy dissipation}\label{sec:energy}
The energy $\mathcal{E}^d$ associated to a solution $u$ of the Navier-Stokes equation is 
\begin{equation}\label{equ:enedet}
  \mathcal{E}_t^d
  = \by{1}{2}\ww<u_t, u_t>
  = \by{1}{2}\int_M \vv<u(t,x),u(t,x)>\, dx.
\end{equation}
The superscript is to emphasize that this is the energy associated with the deterministic solution $u$.
Since 
$\dd{t}{}\mathcal{E}^d_t 
= -\eta\ww<\nabla u(t,x),\nabla u(t,x)>$, 
it follows that energy dissipates for $\eta>0$. 

For $\xi\in\gu_0$ let
\[
 \mathcal{H}_0(\xi) = \by{1}{2}\ww<\xi,\xi> 
\]
We define the energy of a solution $\xi_t$ of Equation~\eqref{e:sdeP} as 
\begin{equation}\label{e:ensto}
  \mathcal{E}^s_t = E[\mathcal{H}_0(\xi_t)].
\end{equation}
The superscript $s$ stands for stochastic, even though $\mathcal{E}^s_t$ is not a random variable. 

This section is concerned with two questions: 
\begin{enumerate}[\up (1)]
    \item 
    Do conservation or dissipation for $\mathcal{H}_0$ or $\mathcal{E}^s$ hold?
    \item
    Does dissipation of $\mathcal{E}^d$ follow from the stochastic formulation? 
\end{enumerate}

Concerning the first question, note that Equation~\eqref{e:ham_mf2} yields
\begin{align}
\label{e:non-cons}
    \delta\by{1}{2}\ww<\xi_t,\xi_t>
    =
    \ww<\ad^{\top}
        (u_t\,\delta t + \nu\sum X_{\alpha}\,\delta W_t^{\alpha}).\xi_t,
        \xi_t>
    = 
    -\ww<\xi_t, 
    [u_t, \xi_t]\,\delta t 
    + \nu\sum [X_{\alpha}, \xi_t]\,\delta W_t^{\alpha} >
\end{align}
whence commutation of $\xi_t$ with $u_t$ and $X_{\alpha}$ is an obstruction to stochastic energy conservation.
Thus $\mathcal{H}_0$ is not conserved along $\xi_t$. 

Non-conservation of stochastic energy is ubiquitous stochastic geometric mechanics. This is also discussed in \cite{CGD17} from a perspective of multi-scale analysis and, in finite dimensions, in \cite{ACH16}.

\subsection{Non-dissipation}\label{sec:non-diss}
Suppose $\xi_t$ is a solution of Equation~\eqref{e:sdeP-Ito}, which is the Ito formulation of \eqref{e:sdeP}.
Let $\hat{Y}_{\alpha}(\xi)$ as in Lemma~\ref{lem:1}. Then
\begin{equation}\label{e:dH}
    d\ww<\xi_t,\xi_t>
    = 2\ww<\xi_t,d\xi_t> + \nu^2\sum\ww<\hat{Y}_{\alpha}(\xi_t),\hat{Y}_{\alpha}(\xi_t)>\,dt.
\end{equation}
Using $\ww<\xi,\nabla f> = 0$, which follows by partial integration from $\d(\xi)=0$, we note that
\begin{align}
    \ww<\xi_t,d\xi_t>
    &=
    \ww<\xi_t, 
    -(\nabla_{u_t}\xi_t + u_t'\otimes\xi_t - \eta\Delta\xi_t)\,dt
    + \sum\hat{Y}_{\alpha}(\xi_t)\,dW_t^{\alpha} > \\
    &= 
    \ww<\xi_t, 
    (-u_t'\otimes\xi_t + \eta\Delta\xi_t )\,dt
    + \sum\hat{Y}_{\alpha}(\xi_t)\,dW_t^{\alpha} >. 
\end{align}
Note that $\ww<\xi_t, u_t'\otimes\xi_t> = \ww<\xi_t, [u_t,\xi_t]>$. 
Furthermore, 
\begin{align}\label{e:crossterms}
    &\sum\ww<\hat{Y}_{\alpha}(\xi_t),\hat{Y}_{\alpha}(\xi_t)>
    + \sum\ww<\nabla_{X_{\alpha}}\nabla_{X_{\alpha}}\xi , \xi>\\
    &=
    \sum\ww< X_{\alpha}'\otimes\xi , X_{\alpha}'\otimes\xi>
    + 
    \sum\ww<-\nabla f_{\alpha}^{\xi} , \nabla_{X_{\alpha}}\xi_t+X_{\alpha}'\otimes\xi_t >
    \notag 
\end{align}
since the cross terms 
$\sum\ww<\nabla_{X_{\alpha}}\xi, X'_{\alpha}\otimes\xi> 
= -\ww<\xi,\nabla_{X_{\alpha}}(X'_{\alpha}\otimes\xi)> 
= 0$
vanish by Equation~\eqref{e:alpha_sum} and where 
$\Delta f_{\alpha}^{\xi} = \d(\nabla_{X_{\alpha}}\xi_t+X_{\alpha}'\otimes\xi_t)$.
Now,
\begin{equation}
    \sum\ww<-\nabla f_{\alpha}^{\xi} , \nabla_{X_{\alpha}}\xi_t+X_{\alpha}'\otimes\xi_t >
    = 
    - 
    \sum \ww<\nabla f_{\alpha}^{\xi} , \nabla f_{\alpha}^{\xi}>
\end{equation}
and 
\begin{align}\label{e:X_alpha}
    &\sum\ww< X_{\alpha}'\otimes\xi , X_{\alpha}'\otimes\xi>\\
    &=
    \sum_{k\in\mathbb{Z}_n^+}\sum_{i=1}^{n-1}\Big(
        \ww<B_{(k,i)}\otimes k\otimes\xi , B_{(k,i)}\otimes k\otimes\xi>%\\
        %&\phantom{==}
        +\ww<A_{(k,i)}\otimes k\otimes\xi,A_{(k,i)}\otimes k\otimes\xi>
    \Big)
    \notag \\
    &=
    \sum_{k\in\mathbb{Z}_n^+}\sum_{i=1}^{n-1}
    \int_M 
    \frac{\sin^2\vv<k,x>+\cos^2\vv<k,x>}{|k|^{2s+2}}
        \vv<k_i^{\bot}k^t\xi(x),k_i^{\bot}k^t\xi(x)>
    \,dx \notag \\
    &=
    \sum_{k\in\mathbb{Z}_n^+}\sum_{i=1}^{n-1}
    \int_M 
    \frac{\vv<k,\xi(x)>^2|k_i^{\bot}|^2}{|k|^{2s+2}}
    \,dx 
    =
    (n-1)\sum_{k\in\mathbb{Z}_n^+}
    \frac{1}{|k|^{2s}}
    \int_M 
    \vv<k,\xi(x)>^2
    \,dx 
    \notag 
\end{align}
where we use that $|k_i^{\bot}| = |k|$.

Equation~\eqref{e:dH} now yields
\begin{align}
    \dd{t}{}\mathcal{E}_t^s(\xi) 
    &=
    \by{1}{2}E\Big[-\ww<\xi_t, [u_t, \xi_t]>
    + 
    \nu^2(n-1)\sum_{k\in\mathbb{Z}_n^+}
    \frac{1}{|k|^{2s}}
    \int_M 
    \vv<k,\xi_t(x)>^2
    \,dx 
    %&\phantom{==}
    - 
    \nu^2\sum \ww<\nabla f_{\alpha}^{\xi} , \nabla f_{\alpha}^{\xi}>\Big]
    %\notag
\end{align}
which certainly is not negative in general. We note that the non-dissipation of $\mathcal{E}^s$ is due to the line stretching terms~\eqref{e:line-stretching}, which are precisely the terms needed to make the Kelvin Circulation Theorem~\ref{prop:KC} hold.

\subsection{Dissipation}\label{sec:diss}
Let us now perform the same analysis as in Section~\ref{sec:non-diss}, but with respect to the formulation~\eqref{e:sde3}:
Consider, with $\nu=\sqrt{\by{2\eta}{c^s}}$ and $u_0\in\gu_0$ as above,  the mean-field equation
\begin{align}\label{e:sde1}
 \delta \wt{\zeta}_t 
 &= - \Big(P\nabla_{u_t}\wt{\zeta}_t + \eta\nabla\d\,\wt{\zeta}_t \Big)\,\delta t
    - \nu \sum \nabla_{X_{\alpha}}\wt{\zeta}_t \,\delta W_t^{\alpha}\\
 \wt{\zeta}_0 &= u_0  \notag \\
 u_t &= E[\wt{\zeta}_t] \notag
\end{align}
This is now an equation in $\gu$. Its solutions are not necessarily divergence free.  

\begin{theorem}[\cite{H17}]\label{thm:mainH17}
If $\wt{\zeta}_t\in\gu$ is a strong solution to \eqref{e:sde1} on $[0,T]$ such that $\wt{\zeta}_0=u_0\in\gu_0$ then $u(t,x) = E[\ev_x(\wt{\zeta}_t)]$ satisfies the Navier-Stokes equation \eqref{e:nse} with $u(0,.)=u_0$ for $t\in[0,T]$.
Further, the Ito version of \eqref{e:sde1} is
\begin{equation}\label{e:sde2}
 d \wt{\zeta}_t 
 = \Big(-P\nabla_{u_t}\wt{\zeta}_t - \eta\nabla\d\,\wt{\zeta}_t + \eta\Delta\wt{\zeta}_t \Big)\,dt
    - \nu \sum \nabla_{X_{\alpha}}\wt{\zeta}_t \, dW_t^{\alpha}.
\end{equation}
\end{theorem}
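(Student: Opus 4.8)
The plan is to prove the two assertions of Theorem~\ref{thm:mainH17} in sequence, modeling the argument on the proof of Theorem~\ref{thm:main}(1). First I would pass from the Stratonovich formulation~\eqref{e:sde1} to its Ito counterpart~\eqref{e:sde2} by computing the Ito--Stratonovich correction term. Writing the diffusion coefficient as $\wt{\zeta}\mapsto-\nu\nabla_{X_{\alpha}}\wt{\zeta}$, the correction is one-half of the quadratic covariation of this coefficient with $W^{\alpha}$. Since $\nabla_{X_{\alpha}}$ is a fixed linear operator, exactly the same bracket computation as in the proof of Theorem~\ref{thm:main}(1) applies: the cross-variation produces $\tfrac{\nu^2}{2}\sum\nabla_{X_{\alpha}}\nabla_{X_{\alpha}}\wt{\zeta}_t$, which by~\eqref{e:Delta} equals $\tfrac{\nu^2}{2}c^s\Delta\wt{\zeta}_t=\eta\Delta\wt{\zeta}_t$ because $\eta=c^s\nu^2/2$. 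Adding this drift correction to the Stratonovich drift $-(P\nabla_{u_t}\wt{\zeta}_t+\eta\nabla\d\,\wt{\zeta}_t)$ yields precisely~\eqref{e:sde2}, establishing the second claim.

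For the first assertion I would take the expectation of the Ito equation~\eqref{e:sde2}. The martingale part $-\nu\sum\int_0^t\nabla_{X_{\alpha}}\wt{\zeta}_s\,dW_s^{\alpha}$ has zero expectation, so with $u_t=E[\wt{\zeta}_t]$ and $u_0\in\gu_0$ one obtains
\[
 u_t = u_0 + \int_0^t E\Big[-P\nabla_{u_s}\wt{\zeta}_s - \eta\nabla\d\,\wt{\zeta}_s + \eta\Delta\wt{\zeta}_s\Big]\,ds.
\]
Because $u_s$ is deterministic and the operators are linear in $\wt{\zeta}_s$, the expectation commutes with $P\nabla_{u_s}(\cdot)$, with $\nabla\d$, and with $\Delta$, giving $-P\nabla_{u_s}u_s - \eta\nabla\d\,u_s + \eta\Delta u_s$. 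The essential point is that the two viscosity-type terms recombine: since $\Delta = \nabla\d + \,(\text{the divergence-free part})$ in a way that, after applying $P$, simplifies, one uses $P\Delta u = \Delta P u$ together with $P\nabla\d\,u = 0$ (gradients lie in the kernel of $P$) to collapse $-\eta\nabla\d\,u_s + \eta\Delta u_s$ down to $\eta\Delta u_s$ modulo a gradient absorbed by the pressure. Hence $u_t$ satisfies~\eqref{e:nse}.

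The step I expect to require the most care is the recombination of the two second-order terms $-\eta\nabla\d\,\wt{\zeta}_s+\eta\Delta\wt{\zeta}_s$ after taking expectations, since the solutions $\wt{\zeta}_t$ of~\eqref{e:sde1} are explicitly \emph{not} divergence free, so one cannot simply drop $\d\,\wt{\zeta}_s$ inside the integrand. The resolution is that it is only the averaged, projected quantity $u_t=E[\wt{\zeta}_t]$ that must be divergence free and satisfy Navier--Stokes; I would verify $\d\,u_t=0$ separately (for instance by checking that the drift preserves the divergence-free condition on the mean, or by arguing as in \cite{H17} that $\d\,u_t$ solves a homogeneous heat-type equation with zero initial data), and then use $\d\,u_t=0$ to see that $\nabla\d\,u_t=0$, leaving exactly $\eta\Delta u_t=\eta\Delta P u_t=\eta P\Delta u_t$ as the viscous term in~\eqref{e:nse}. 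Since this is the content of the cited reference \cite{H17}, I would at this level indicate that the claim follows from the computation there, with the Ito reduction above supplied for completeness.
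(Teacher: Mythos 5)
Your proposal is correct, and it is essentially the only reasonable route: the paper itself gives no proof of Theorem~\ref{thm:mainH17} (it is imported from \cite{H17}), and your argument is the direct analogue of the paper's proof of Theorem~\ref{thm:main}(1) — Ito--Stratonovich correction via the quadratic covariation, \eqref{e:Delta} with $\eta=c^s\nu^2/2$ to identify the correction as $\eta\Delta\wt{\zeta}_t$, then expectation of the Ito equation. One small caution: the middle paragraph's appeal to $P\nabla\d\,u=0$ and ``a gradient absorbed by the pressure'' is off target, since the term $-\eta\nabla\d\,u_s$ in the averaged equation does not sit under a Leray projection; the correct resolution is the one you give in your final paragraph, and it is in fact immediate: taking the divergence of $\partial_t u=-P\nabla_u u-\eta\nabla\d\,u+\eta\Delta u$ gives $\partial_t(\d\,u)=-\eta\Delta(\d\,u)+\eta\Delta(\d\,u)=0$ (the projected term being divergence free), so $\d\,u_t=\d\,u_0=0$ and the term $\nabla\d\,u_t$ vanishes identically, leaving exactly \eqref{e:nse}.
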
 

In order to obtain a divergence free solution, we modify Equation~\eqref{e:sde2} as
\begin{equation}\label{e:sde3}
   d \zeta_t 
 %= P\Big( 
 %   (-\nabla_{u_t}\zeta_t - \eta\nabla\d\,\zeta_t + \eta\Delta\zeta_t )\,dt
 %   - \nu \sum \nabla_{X_{\alpha}}\zeta_t \, dW_t^{\alpha}
 %   \Big)
 =  (-P\nabla_{u_t}\zeta_t + \eta\Delta\zeta_t )\,dt
    - \nu \sum P\nabla_{X_{\alpha}}\zeta_t \, dW_t^{\alpha}
\end{equation}
with $u_t=E[\zeta_t]$ and the same initial condition $\zeta_0=u_0$.
Now we have an equation in $\gu_0$, such that 
$
\d\,\zeta = 0
$
and $u_t$ 
still solves the Navier-Stokes Equation~\eqref{e:nse}.
Indeed, if $\wt{\zeta}_t$ is as in Theorem~\ref{thm:mainH17}, then $\zeta_t = P\wt{\zeta}_t$ and $u_t = E[\wt{\zeta}_t] = PE[\wt{\zeta}_t] = E[\zeta_t]$.

\begin{proposition}\label{prop:EnergyDiss}
Assume $\zeta_t$ is a solution of \eqref{e:sde3}. 
Let $P\nabla_u\zeta = \nabla_u\zeta + \nabla p$, $P\nabla_{X_{\alpha}}\zeta = \nabla_{X_{\alpha}}\zeta + \nabla q^{\alpha}$
and $P\nabla_{X_{\alpha}}u = \nabla_{X_{\alpha}}u + \nabla q_u^{\alpha}$.
Let $\mathcal{E}^s_t = \by{1}{2}E[\ww<\zeta_t,\zeta_t>]$.
\begin{enumerate}[\up (1)]
\item
  $\mathcal{E}^s_t = E[\mathcal{H}_0(\zeta_t)]$ is dissipative:
  $\dd{t}{}\mathcal{E}_t^s = -\by{\nu^2}{2}\sum_{\alpha}\ww<\nabla q^{\alpha},\nabla q^{\alpha}>
  \le -\by{\nu^2}{2}\sum_{\alpha}\ww<\nabla q_u^{\alpha},\nabla q_u^{\alpha}> < 0$,
  for $\nu>0$.
\item
  $\mathcal{E}^s_t\ge \by{1}{2}\ww<E[\zeta_t],E[\zeta_t]> = \mathcal{E}_t^d$.
\item
  $\mathcal{E}^d_t$ is monotone decreasing in $t$.
\end{enumerate}
\end{proposition}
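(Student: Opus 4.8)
The plan is to treat the three parts in increasing order of difficulty: parts (2) and (3) follow quickly from Jensen's inequality and the already-established deterministic energy balance, while part (1) requires an It\^o computation whose drift and quadratic-variation contributions must be shown to cancel up to the pressure terms.

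For part (1), I would start from the It\^o form \eqref{e:sde3} and apply It\^o's formula to $\mathcal{H}_0(\zeta_t)=\by{1}{2}\ww<\zeta_t,\zeta_t>$, giving
\[
 d\ww<\zeta_t,\zeta_t>
 = 2\ww<\zeta_t,d\zeta_t>
 + \nu^2\sum_{\alpha}\ww<P\nabla_{X_{\alpha}}\zeta_t,P\nabla_{X_{\alpha}}\zeta_t>\,dt,
\]
the last term being the quadratic variation of the (independent) driving noises. In the drift of $2\ww<\zeta_t,d\zeta_t>$ I would use $\d\,\zeta_t=0$ to discard the pressure (so $\ww<\zeta_t,P\nabla_{u_t}\zeta_t>=\ww<\zeta_t,\nabla_{u_t}\zeta_t>$), then $\d\,u_t=0$ with partial integration to obtain $\ww<\zeta_t,\nabla_{u_t}\zeta_t>=0$; the viscous term contributes $\eta\ww<\zeta_t,\Delta\zeta_t>=-\eta\ww<\nabla\zeta_t,\nabla\zeta_t>$, while the martingale part has vanishing expectation. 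For the quadratic-variation term I would invoke orthogonality of the Hodge decomposition to write $\ww<P\nabla_{X_{\alpha}}\zeta,P\nabla_{X_{\alpha}}\zeta>=\ww<\nabla_{X_{\alpha}}\zeta,\nabla_{X_{\alpha}}\zeta>-\ww<\nabla q^{\alpha},\nabla q^{\alpha}>$, use anti-self-adjointness of $\nabla_{X_{\alpha}}$ (valid since $\d\,X_{\alpha}=0$) together with \eqref{e:Delta} to get $\sum_{\alpha}\ww<\nabla_{X_{\alpha}}\zeta,\nabla_{X_{\alpha}}\zeta>=-c^s\ww<\zeta,\Delta\zeta>=c^s\ww<\nabla\zeta,\nabla\zeta>$, and finally apply $\nu^2 c^s=2\eta$. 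The $+2\eta\ww<\nabla\zeta,\nabla\zeta>$ thus produced cancels the viscous $-2\eta\ww<\nabla\zeta,\nabla\zeta>$ from the drift, leaving
\[
 \dd{t}{}\mathcal{E}^s_t
 = -\by{\nu^2}{2}\sum_{\alpha}E\big[\ww<\nabla q^{\alpha},\nabla q^{\alpha}>\big],
\]
which is the first equality. The comparison with $q_u^{\alpha}$ then follows because $\nabla q^{\alpha}=-(\id-P)\nabla_{X_{\alpha}}\zeta_t$ and $\nabla q_u^{\alpha}=-(\id-P)\nabla_{X_{\alpha}}u_t=E[-(\id-P)\nabla_{X_{\alpha}}\zeta_t]$ (using linearity of $\id-P$ and of $\nabla_{X_{\alpha}}$, together with $u_t=E[\zeta_t]$), so that Jensen's inequality $E[\|\cdot\|^2]\ge\|E[\cdot]\|^2$ applied termwise gives $\sum_{\alpha}E[\ww<\nabla q^{\alpha},\nabla q^{\alpha}>]\ge\sum_{\alpha}\ww<\nabla q_u^{\alpha},\nabla q_u^{\alpha}>$. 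Strict negativity holds whenever $u_t\not\equiv0$, since $\nabla_{X_{\alpha}}u_t$ fails to be divergence free for some $\alpha$, forcing some $\nabla q_u^{\alpha}\neq0$.

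Part (2) is immediate from Jensen: $\mathcal{E}^s_t=\by{1}{2}E[\ww<\zeta_t,\zeta_t>]\ge\by{1}{2}\ww<E[\zeta_t],E[\zeta_t]>=\by{1}{2}\ww<u_t,u_t>=\mathcal{E}^d_t$, using $u_t=E[\zeta_t]$ and the fact that this $u_t$ solves \eqref{e:nse} (via $\zeta_t=P\wt{\zeta}_t$ and Theorem~\ref{thm:mainH17}), so that $\by{1}{2}\ww<u_t,u_t>$ is indeed the deterministic energy \eqref{equ:enedet}. Part (3) then follows from the opening remark of Section~\ref{sec:energy}: since $u_t$ solves the Navier--Stokes equation, $\dd{t}{}\mathcal{E}^d_t=-\eta\ww<\nabla u_t,\nabla u_t>\le0$, so $\mathcal{E}^d_t$ is monotone decreasing. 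I expect the main obstacle to be the bookkeeping in part~(1)---specifically verifying the exact cancellation of the two $2\eta\ww<\nabla\zeta,\nabla\zeta>$ terms, which hinges on the precise scaling $\nu^2 c^s=2\eta$ and on correctly isolating the pressure contributions $\nabla q^{\alpha}$ through the Hodge orthogonality.
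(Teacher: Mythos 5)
Your treatment of parts (1) and (2) matches the paper's proof in all essentials. The It\^o expansion, the vanishing of $\ww<\zeta,P\nabla_u\zeta>$ via $\d\,\zeta=\d\,u=0$, the cancellation of $\nu^2 c^s\ww<\nabla\zeta,\nabla\zeta>=2\eta\ww<\nabla\zeta,\nabla\zeta>$ against the viscous drift, and the Jensen step $E[\ww<\nabla q^{\alpha},\nabla q^{\alpha}>]\ge\ww<\nabla q_u^{\alpha},\nabla q_u^{\alpha}>$ are exactly the paper's argument; your Pythagorean phrasing $\ww<P\nabla_{X_{\alpha}}\zeta,P\nabla_{X_{\alpha}}\zeta>=\ww<\nabla_{X_{\alpha}}\zeta,\nabla_{X_{\alpha}}\zeta>-\ww<\nabla q^{\alpha},\nabla q^{\alpha}>$ is just a cleaner packaging of the paper's explicit partial integrations. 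Part (2) is Jensen in both cases.

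Part (3) is where you genuinely diverge, and in a way worth flagging. You prove monotonicity of $\mathcal{E}^d$ by invoking the deterministic energy identity $\dd{t}{}\mathcal{E}^d_t=-\eta\ww<\nabla u_t,\nabla u_t>$ for solutions of the Navier--Stokes equation. That is a correct proof of the bare statement, but it imports exactly the PDE-level fact that the proposition is designed to reprove: the remark immediately following the statement says that deterministic energy dissipation ``follows from the stochastic formulation~\eqref{e:sde3}, without using the PDE setting.'' The paper instead notes that (2) alone does not give (3) (being dominated by a decreasing function does not force local decrease --- a pitfall you correctly avoid, but by going outside the stochastic framework) and then argues via a restart: condition \eqref{e:sde3} to begin at time $t_0$ with the deterministic datum $\zeta_{t_0}=u_{t_0}$, so that $\mathcal{E}^d_{t_0}=\mathcal{E}^s_{t_0}>\mathcal{E}^s_{t_1}\ge\mathcal{E}^d_{t_1}$ by parts (1) and (2). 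If you want your proof to serve the purpose the proposition is put to in the paper, you should replace your part (3) by this conditioning argument; as a standalone verification of the claim, however, what you wrote is not wrong.
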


In particular, (deterministic) energy dissipation for the Navier-Stokes equation follows from the stochastic formulation~\eqref{e:sde3}, without using the PDE setting. 

\begin{proof}
Ad (1).
Observe that
\[
\ww<\zeta, P\nabla_u\zeta>
= \ww<\zeta, \nabla_u\zeta + \nabla p>
= \int_M\Big(\by{1}{2}\d(\vv<\zeta,\zeta>u)+\d(p\zeta)\Big)\,dx
= 0
\]
and
\begin{align*}
  &\nu^2\ww<P\nabla_{X_{\alpha}}\zeta,P\nabla_{X_{\alpha}}\zeta>
  = \nu^2\ww<\nabla_{X_{\alpha}}\zeta, \nabla_{X_{\alpha}}\zeta + \nabla q^{\alpha}>\\
  &= \nu^2\int_M\Big(\d\Big(\vv<\zeta,\nabla_{X_{\alpha}}>X_{\alpha}\Big)
  - \vv<\zeta,\nabla_{X_{\alpha}}\nabla_{X_{\alpha}}\zeta>\Big)\,dx
  +\nu^2\int_M\Big(\d\Big(q_{\alpha}\nabla_{X_{\alpha}}\zeta\Big)
  - q_{\alpha}\d\Big(\nabla_{X_{\alpha}}\zeta\Big)\Big)\,dx \\
  &= -2\eta\ww<\zeta,\Delta\zeta> + \nu^2\ww<q_{\alpha},\Delta q_{\alpha}>.
\end{align*}
%where we sum over repeated indices.
It follows that
\begin{align}
\label{e:diss1}
  &d\ww<\zeta_t,\zeta_t>
  = 2\ww<\zeta_t,d\zeta_t>
  + \sum_{\alpha}\ww<P\nabla_{X_{\alpha}}\zeta,P\nabla_{X_{\alpha}}\zeta>dt  \\
  &=
  2\ww<\zeta_t, (-P\nabla_{u_t}\zeta_t + \eta\Delta\zeta_t)dt - \sum_{\alpha}P\nabla_{X_{\alpha}}\zeta\,dW_t^{\alpha}>
  %\notag\\
  %&\phantom{==}
  - 2\eta\ww<\zeta_t,\Delta\zeta_t> dt
  - \nu^2\sum_{\alpha}\ww<\nabla q^{\alpha},\nabla q^{\alpha}>dt
  \notag
\end{align}
and, therefore, $d\ww<\zeta_t,\zeta_t> = - \nu^2\sum_{\alpha}\ww<\nabla q^{\alpha},\nabla q^{\alpha}>dt$ as-well as
\[
\dd{t}{}\mathcal{E}_t^s
= -\by{\nu^2}{2}\sum_{\alpha}E[\ww<\nabla q^{\alpha},\nabla q^{\alpha}>]. 
\]
Now, the Jensen inequality implies that 
\[
    E[\ww<\nabla q^{\alpha},\nabla q^{\alpha}>] 
    \ge \ww<E[\nabla q^{\alpha}],\nabla E[q^{\alpha}]> 
    = \ww<\nabla q_u^{\alpha},\nabla q_u^{\alpha}>
\]
which is non-zero since $\d\,\nabla_{X_{\alpha}}u = \tr(u'\otimes X_{\alpha}')\neq0$ for non-trivial solutions $u$ and general $\alpha$.

Ad (2).
This follows from the Jensen inequality.

Ad (3).
This has to be proved separately, since being bounded by a decreasing function does not imply being (locally) decreasing. 
Let $t_0<t_1$ (such that solutions $\zeta$ and $u$ exist on an interval containing these two points).
We can condition \eqref{e:sde3} to start at time $t_0$ with initial condition $\zeta_{t_0} = u_{t_0}$. Then $\mathcal{E}^d_{t_0}=\mathcal{E}^s_{t_0} > \mathcal{E}^s_{t_1}\ge\mathcal{E}^d_{t_1}$.
\end{proof}

We remark that the dissipation of the stochastic energy $\mathcal{E}^s$ follows not, as in the deterministic case, from the Laplacian in the drift, but from the quadratic variation of the martingale part in \eqref{e:sde3}.

One may ask whether $\mathcal{E}^s$ decreases as quickly as $\mathcal{E}^d$. To this end, note that $\by{\nu^2}{2}$ equals $\eta$ up to the $c^s$-factor, and
\begin{align*}
\ww<\nabla q_u^{\alpha},\nabla q_u^{\alpha}>
&= \ww<q_u^{\alpha},\d\,\nabla_{X{\alpha}}u> 
= \ww<\Delta^{-1}(\tr(u'\otimes X_{\alpha}'), \tr(u'\otimes X_{\alpha}')>.
\end{align*}
Hence the stochastic energy dissipates at a rate that is proportional to the $L^2$-square of $u'$, just as in the deterministic case.

\section{Interacting particle system (IPS)}\label{sec:IPS}
Numerical algorithms for the  simulation of mean field equations can be devised by means of IPS approximations. This is true, both, in the finite and infinite dimensional setting. See \cite{AD95,delmoral,Gov}.  
The mean field formulation of \cite{CI05} has been used in \cite{IM08,IN11,NS17} to derive and study Lagrangian formulations of associated interacting particle systems.
One of the key issues in these articles is the fact that the energy of the approximating particle system does not completely dissipate (even if the solution is defined globally in time). Hence \cite{NS17} use a resetting technique, inspired by \cite{IN11}, in their approximation scheme. 

Below we use equations \eqref{e:sdeP-Ito} and \eqref{e:sde3} to discuss our version of an approximating interacting particle system. 

\subsection{Version 1 -- \eqref{e:sdeP-Ito}}
Fix a large integer $N$.
The approximating IPS for the mean field Equation~\eqref{e:ham_mf2} is \eqref{e:ham_ips3}.
The Ito version of \eqref{e:ham_ips3} is 
\begin{align}\label{e:ips1}
    d\xi^i
    &= -P\Big(\nabla_{u^N}\xi^i + (u^N)'\otimes\xi^i\Big)\,d t
            +\eta\Delta\xi^i\,d t
       - \nu\sum P\Big(\nabla_{X_{\alpha}}\xi^i 
            + X_{\alpha}'\otimes\xi^i\Big)\,d W^{\alpha,i} \\
    u^N 
    &= \by{1}{N}\sum\xi^i \notag\\
    \xi^i(0)
    &= u_0
    \notag 
\end{align}
where $W^{i,\alpha}$ is a sequence of pairwise independent Brownian motions and $i=1,\dots,N$. Equation~\eqref{e:ips1} is therefore the approximating IPS for the mean field system \eqref{e:sdeP-Ito}. 
We have $\d(\xi^i)=0=\d(u^N)$. 
Note that the empirical average $u^N_t$ is a stochastic process.
%For numerical purposes we should consider a finite set of $\alpha$s in \eqref{e:ips1}, whence also the following formulas would have to be adapted. 

Let $R_t$ denote the martingale part of $\mathcal{H}_0(u^N_t) = \by{1}{2}\ww<u^N_t,u^N_t>$. 
Equation~\eqref{e:crossterms} applied to $u^N$ yields
\begin{align}
\label{e:version1}
    &d\mathcal{H}_0(u^N) - dR\\
    &= 
    \ww<-P\Big(\nabla_{u^N}u^N + (u^N)'\otimes u^N\Big) + \eta\Delta u^N, u^N>\,dt
    + \by{\nu^2}{2}\sum\ww<\hat{Y}_{\alpha}(u^N),\hat{Y}_{\alpha}(u^N)> \,dt \notag\\
    &=
    \ww<u^N,\eta\Delta u^N>\,dt 
    + \by{\nu^2}{2N^2}\sum_{i}\Big(
    \ww<-c^s\Delta\xi^i,\xi^i>
    + \sum\ww<X_{\alpha}' \otimes\xi^i, X_{\alpha}' \otimes\xi^i>
    - \sum\ww<\nabla f_{\alpha}^i,\nabla f_{\alpha}^i> 
    \Big) \,dt \notag\\
    &=
    -\by{\eta}{N^2}\sum_{i\neq j} \ww<\nabla\xi^i,\nabla\xi^j >\,dt 
    + \by{\nu^2}{2N^2}\sum_{\alpha,i}\Big(
     \ww<X_{\alpha}' \otimes\xi^i, X_{\alpha}' \otimes\xi^i>
    - \ww<\nabla f_{\alpha}^i,\nabla f_{\alpha}^i> 
    \Big)\,dt \notag
\end{align}
where $\Delta f_{\alpha}^i = -\d(\nabla_{X_{\alpha}}\xi^i + X_{\alpha}'\otimes\xi^i)$.
The term $-\by{\eta}{N^2}\sum_{i\neq j} \ww<\nabla\xi^i,\nabla\xi^j >$ corresponds to \cite[Equation~(2.8)]{NS17}. The other two terms are due to the fact that the $X_{\alpha}$ are non-constant, which is a difference between \eqref{e:ips1} and \cite[Lemma~4.2]{IM08} that is due to the infinite dimensional approach. An explicit formula for the middle term of the right hand side is given in \eqref{e:X_alpha}.

\subsection{Version 2 -- \eqref{e:sde3}}
Consider now the approximating IPS for Equation~\eqref{e:sde3}: 
\begin{align}\label{e:ips2}
    \delta\zeta^i
    &= -P\nabla_{u^N}\zeta^i \,dt
        +\eta\Delta\zeta^i\,dt
       - \nu\sum P\nabla_{X_{\alpha}}\zeta^i \,dW^{i,\alpha} \\
    u^N 
    &= \by{1}{N}\sum\zeta^i \notag\\
    \zeta^i(0)
    &= u_0
    \notag 
\end{align}
where $W^{i,\alpha}$ is a sequence of pairwise independent Brownian motions and $i=1,\dots,N$.
Again we have $\d(\zeta^i)=0=\d(u^N)$. 
Let $R_t$ denote the martingale part of $\mathcal{H}_0(u^N_t) = \by{1}{2}\ww<u^N_t,u^N_t>$. The calculation~\revise{\eqref{e:diss1}} now implies 
\begin{align}
\label{e:version2}
   d\mathcal{H}_0(u^N) - dR
    &=
    -\by{\eta}{N^2}\sum_{i\neq j} \ww<\nabla\zeta^i,\nabla\zeta^j >\,dt 
    %&\phantom{==}
    - \by{\nu^2}{2N^2}\sum_{\alpha,i}
      \ww<\nabla f_{\alpha}^i,\nabla f_{\alpha}^i> 
    \,dt 
\end{align}
where  
$\Delta f_{\alpha}^m = -\d(\nabla_{X_{\alpha}}\revise{\zeta}^m)$.

Both formulations, \eqref{e:version1} and \eqref{e:version2}, involve cross terms of the type $\sum_{i\neq j} \ww<\nabla\zeta^i,\nabla\zeta^j >$. This leads to the non-dissipation of the expectation $E[\mathcal{H}_0(u_t^N)]$ that has been observed in \cite{IM08,IN11,NS17}. 
%In fact, \citeN{IM08} give an explicit example where solutions exist globally and the limits $t\to\infty$ and $\N\to\infty$ do not commute. 

\section{Comparison with other approaches}\label{sec:comp}

\subsection{Stochastic Lagrangian least action principle -- Cipriano and Cruzeiro \cite{CC07,C}}\label{sec:least-action}
Stochastic variational principles as a means to characterize solutions to the Navier-Stokes equation have already been considered by \cite{Y83}.
In \cite{CC07,C} solutions of the Navier-Stokes equation are characterized as solutions to a stochastic variational principle defined on the  group of volume preserving homeomorphisms $G^0_0$.
They consider 
Lagrangian curves of the form 
\begin{equation*}
\tag{\cite[Equ.~(3.1)]{CC07}}
    \delta g^u_t = TR^{g^u_t}
    \Big(u_t \,\delta t
     + \nu X_{\alpha}\,\delta W_t^{\alpha}\Big), \;
    g^u_0 = e 
\end{equation*}
where $u_t$ is a time dependent vector field, and prove that $u$ solves the Navier-Stokes equation if and only if $g^u_t$ is a solution to a stochastic variational principle (\cite[Theorem~4.1]{CC07}). Note that  \cite[Equ.~(3.1)]{CC07} coincides with our Equation~\eqref{e:ham_mf1}. By Theorem~\ref{thm:main}, $u$ solves the Navier-Stokes equation if and only if $(g_t^u,\Ad^{\top}(g^u_t).u_u)$ is a solution to the Hamiltonian  Equation~\eqref{e:ham_mf3}. Therefore, the stochastic variational principle of \cite{CC07} is equivalent to the Hamiltonian mean field formulation \ref{e:sdeP}. However, we remark that the regularity assumptions of Theorem~\ref{thm:main} are stronger than those of  \cite[Theorem~4.1]{CC07}. Thus we have proved this equivalence only under the stronger regularity assumptions on the solutions $u$ and $g^u$.

\subsection{Comparison with Constantin and Iyer~\cite{CI05}}\label{sec:CI}
In \cite{CI05} the stochastic vector $w = A'\otimes(u_0\circ A)$ is defined; in this definition $A=X^{-1}$, where $X=X(t,x)$ is the Lagrangian flow map, i.e.\ a volume preserving diffeomorphism for each $t$, and the divergence free vector field $u_0$ is the initial condition. 
As before, we make use of the notation 
$
 A'\otimes v = \sum(\del_i A^j)v^j e_i = (\nabla^t A)v
$
where $(e_i)$ is the standard basis in $\R^3$.
Let $B$ denote three-dimensional Brownian motion. 
In \cite[Theorem~2.2]{CI05} it is assumed that the pair $(X,u)$ satisfies
the mean field Ito SDE
\begin{equation*}
    dX = u\,dt + \sqrt{2\eta}\,dB,\;
    u = E[Pw],\;
    X(0,x) = x
\end{equation*}
and concluded that $u$ is a solution of the incompressible Navier-Stokes equation with initial data $u_0$ and viscosity coefficient $\eta$.
In order to do so, they
derive the system
\begin{equation*}
\tag{\cite[Equation~4.5]{CI05}}
dw_t = \Big(-\nabla_{u_t} w_t + \eta \Delta w_t - u_t'\otimes w_t\Big)\,dt + \sqrt{2\eta}\nabla w_t\,d\retwo{B}_t.
\end{equation*}
%We use the notation $u'\otimes w = \sum(\del_i u^j)w^j e_i = (\nabla^t u)w$ where $(e_i)$ is the standard basis in $\R^3$.
Now, \cite{CI05} construct the mean Eulerian velocity as $u = PE[w]$. 

It is further shown that a  Kelvin Circulation Theorem holds along stochastic Lagrangian paths:
\begin{equation*}
    \tag{\cite[Prop.~2.9]{CI05}}
    \oint_{X_t(C)}Pw_t = \oint_C u_0
\end{equation*}
where $C$ is a closed curve in $M$. 

The term $u'\otimes\xi$ in \eqref{e:sdeP} is also present in \cite[Equation~(4.5)]{CI05} and it is this term which makes the Kelvin Circulation Theorem work. In our approach we also have to include the $X_{\alpha}'\otimes\xi$ which is not needed in the finite dimensional setting of \cite{CI05}. Thus  Equation~\eqref{e:sdeP-Ito} is an infinite dimensional analogue of \cite[Equation~(4.5)]{CI05}.

\subsection{Comparison with Holm et al.~\cite{Holm15,CFH17}}\label{sec:Holm}
Consider vector fields $u_t$ and $\wt{X}_{\alpha}$ on $\mathbb{T}^3$ where  $u_t$ is time dependent. Let, as above, $W^{\alpha}$ be a sequence of mutually independent Brownian motions. Consider the Stratonovich SDE
\begin{equation*}
\tag{\cite[Equ.~(2.29)]{CFH17}}
\delta y_t = u_t\,\delta t + \nu\sum \widetilde{X}_{\alpha} \,\delta W_t^{\alpha}
\end{equation*}
where we use the notation $y_t$ to be consistent with \cite{CFH17} and add the parameter $\nu$. 
We remark that the $\widetilde{X}_{\alpha}$ correspond to the deterministic vector fields $\xi_i$ in \cite[Equ.~(2.29)]{CFH17}, while we use $\xi_t$ in this paper to denote the stochastic \revise{momentum} field.

Let $dx$ denote the standard volume element in $\revise{\mathbb{T}}^3$. Then \cite{CFH17} use a stochastic version of Newton's second law to derive the $3D$ \emph{stochastic Euler equation}
\begin{equation*}
 \tag{\cite[Equ.~(2.40)]{CFH17}}
 (\delta + \ad^*(\delta y_t)).\psi_t\otimes dx = -dp_t\otimes dx \,\delta t. 
\end{equation*}
where $\psi_t\otimes dx$ is now a stochastic curve of one-form densities and $p_t$ is the pressure. For $\eta\in\gu$ we have the dual pairing $\vv<\psi_t\otimes dx,\eta> = \int\vv<\psi_t(x),\eta(x)>\,dx$, and $\ad^*$ is defined with respect to this pairing as in Section~\ref{sec:InOp}. Since $dx$ is constant, we can reformulate the stochastic Euler equation as an equation in $\gu_0^*$:
\begin{equation}
    \label{e:stoch_euler_equ}
    (\delta + \ad^*(\delta y_t)).[\psi_t] = 0
\end{equation}
where we use again the notation from Section~\ref{sec:InOp}. 

Assume that $\wt{X}_{\alpha} = X_{\alpha}$. 
Note that this means that $\delta y_t = \delta g_t \circ g_t^{-1}$ in the terminology of \eqref{e:ham_mf1}.
Therefore, under these assumptions, the following are equivalent for a stochastic process $\zeta_t$ in $\gu_0$ with $E[\zeta_t] = v_t$: 
\begin{enumerate}[\up (1)]
    \item 
    The stochastic Newton law \cite[Equ.~(2.36)]{CFH17}, with constant density $\rho=1$ and where the only force on the fluid is the gradient of the pressure, holds as a one-form relation for $\zeta_t^{\flat}$.
    \item
    The stochastic Euler equation \cite[Equ.~(2.40)]{CFH17} holds for $\zeta_t^{\flat}\otimes dx$ in the space of one-form densities.
    \item
    Equation~\eqref{e:ham_mf2} holds with respect to $u$: 
    $\delta \zeta_t
    = \ad^{\top}(u_t).\zeta_t\,\delta t
    + \nu\sum_{\alpha}\ad^{\top}(X_{\alpha}).\zeta_t\,\delta W_t^{\alpha}$.
    \item
    $\zeta_t = \Ad^{\top}(g).u_0$ is the stochastic \revise{momentum} along $g_t$.
\end{enumerate}
Indeed, the equivalence of (1) and (2) is shown by \cite[Equ.~(2.37)]{CFH17}. Equivalence of (2) and (3) follows from the reformulation \eqref{e:stoch_euler_equ} and the identity $\check{\mu}\circ\ad^{\top} = \ad^*\circ\check{\mu}$. Equivalence of (3) and (4) is Equation~\eqref{e:proof-main}.

If (1) - (4) hold then, by Corollary~\ref{cor:main}, $u$ is a solution to the Navier-Stokes Equation~\eqref{e:nse} if and only if $u=v$. 
This is the case if and only if (2) and (3) are mean field equations, where (3) coincides with the system~\eqref{e:sdeP}. 

In particular, Proposition~\ref{prop:KC} can also be seen as a direct consequence of the stochastic Kelvin Circulation Theorem in \cite{Holm15}.

\begin{comment}[\up (1)]
    \item 
    The stochastic Newton law \cite[Equ.~(2.36)]{CFH17}, with constant density $\rho=1$ and where the only force on the fluid is the gradient of the pressure, holds as a one-form relation for $\zeta_t^{\flat}$, and $E[\zeta_t] = u_t$.
    \item
    The stochastic Euler equation \cite[Equ.~(2.40)]{CFH17} holds for $\zeta_t^{\flat}\otimes dx$ in the space of one-form densities, and $E[\zeta_t] = u_t$.
    \item
    $\zeta_t$ satisfies the mean-field system~\eqref{e:sdeP}. 
    \item
    $u$ solves the Navier-Stokes Equation~\eqref{e:nse}. 
\end{comment}

\subsection{Comparison with \cite{H17}}\label{sec:H17}
Equations \eqref{e:sde1} and \eqref{e:sdeP} are, at first sight, very similar, and so are the respective conclusions of Theorems~\ref{thm:mainH17} and \ref{thm:main}. However,
carrying out the calculation \eqref{e:kct1} with respect to $\zeta_t$, a solution of \eqref{e:sde1}, immediately shows that the line stretching terms $u'\otimes\zeta$ and $X'_{\alpha}\otimes\zeta$ do not have a counterpart in $\delta \zeta$, whence there is no cancellation. We conclude that the Kelvin Circulation Theorem does not hold for $\zeta$. 
Furthermore, \eqref{e:sde1} is an equation in $\gu$. Thus, from a structural point of view, \eqref{e:sde1} and \eqref{e:sdeP} are very different. 

In \cite{Holm02} it is argued that conservation of circulation is a quality criterion for any proposed model of fluid mechanics. Given Proposition~\ref{prop:kct_gen}, we can extend this argument as follows: The natural phase space of incompressible fluid mechanics is $G_0\times\gu_0$. If the model is Hamiltonian with respect to a (possibly vector valued) function $\mathcal{H}$, then $\mathcal{H}$ should be invariant under the particle relabeling symmetry, which is given by right action of $G_0$ on itself. Thus, by Proposition~\ref{prop:kct_gen}, if a model does not satisfy the Kelvin Circulation Theorem it cannot be a Hamiltonian system. This is why we prefer \eqref{e:sdeP} compared to \eqref{e:sde1}. In fact, the Hamiltonian structure is given by \eqref{e:ham_mf3}. We call this structure the \emph{Hamiltonian approach}.  

Equation~\eqref{e:sde1} was derived in \cite[Sec.~1]{H17} by using a stochastic version of the material derivative along Lagrangian paths. While this is mathematically correct, the resulting structural properties and physics are different, as argued in the previous paragraph. In this regard it is worth noticing that this \emph{material derivative approach} and the Hamiltonian approach are indeed equivalent in the deterministic case: This follows immediately from the observation that the deterministic version of the line stretching terms \eqref{e:line-stretching} is $u'\otimes u = \by{1}{2}d\vv<u,u>$, which vanishes upon integration over closed loops. Thus the two approaches differ, in the deterministic case, by a full differential, and this difference is absorbed by the pressure term. See also Section~\ref{sec:memin}.

\subsection{Comparison with M\'emin et al.~\cite{Mem14,RMC17}}\label{sec:memin}
The approach of \cite{Mem14,RMC17} is also based on the material derivative approach. 
Analogously to Equation~\eqref{e:ham_mf1}, they assume Lagrangian particle trajectories given by an Ito SDE of the form
\begin{equation*}
\tag{\cite[Equ.~(5)]{RMC17}}
    dX_t 
    = w(t,X_t)\,dt + \sigma(t,X_t)\,dB_t
\end{equation*}
where $B_t$ is $n$-dimensional Brownian motion. 
However, their material derivative approach is not derived from  a direct stochastic perturbation of the deterministic argument as in \cite[Sec.~1]{H17}, but from a stochastic Reynold's Transport Theorem applied to scalar quantities. Since the transformation of scalar quantities does not involve line stretching terms, the difference between the Hamiltonian and the material derivative approach disappears, as explained in Section~\ref{sec:H17}. 
Indeed, various versions of a stochastic Navier-Stokes equation are derived in this setting (\cite[Equ.~(43)]{Mem14} and \cite[Equ.~(38)]{RMC17}). 
Compared to the Hamiltonian  approach~\eqref{e:sdeP} there are the following differences:
\begin{enumerate}[\up (1)]
    \item
    The Kelvin Circulation Theorem does not hold for \cite[Equ.~(43)]{Mem14} or \cite[Equ.~(38)]{RMC17}. This is due to the same reasons that also apply to the model \eqref{e:sde1}, as explained in Section~\ref{sec:H17}. 
    \item 
    \cite[Sec.~5]{Mem14} uses a stochastic version of the stress tensor to model shear forces, whereas \eqref{e:sdeP} is derived without the stress tensor.  
    \item 
    Energy is conserved exactly along stochastic paths for the model of \cite{Mem14,RMC17}.
    %This holds due to the special set of assumptions on $w$ and $\sigma$ in loc.\ cit. 
    However, this does not hold for Equation~\eqref{e:sdeP}, due to the line stretching terms as shown by \eqref{e:non-cons}. Nor does it hold for \eqref{e:sde1} or \eqref{e:sde3}, due to the non-solenoidal character of $\nabla_{X_{\alpha}}\zeta$ as shown by \eqref{e:diss1}.
\end{enumerate}

\section{Conclusions}\label{sec:conc}
We have constructed a Hamiltonian interacting particle system~\eqref{e:ham_ips1} for the description of an ensemble of fluid particles from the assumption that the energy separates into a deterministic and a stochastic part. The deterministic component is  responsible for the overall drift of the ensemble, while the stochastic part models the molecular transfer of momentum between fluid layers of different velocities. 
It is this molecular transfer that is responsible for shear forces which are thus accounted for in the interacting particle system (IPS). 
In the continuum limit, as the number of particles goes to infinity, this IPS yields a mean field \revise{SDE}~\eqref{e:sdeP}. The mean field system has the following properties: 
\begin{itemize}
    \item 
    It is equivalent to a stochastic Hamiltonian system~\eqref{e:ham_mf3} with respect to the natural symplectic structure on the phase space $G_0\times\gu_0$ of incompressible fluid mechanics.
    \item
    Averaging over solutions of \eqref{e:sdeP}, which are \emph{stochastic \revise{momenta identified with elements in velocity space}}, implies a solution to the Navier-Stokes Equation~\eqref{e:sdeP}. 
    Moreover, any Gaussian perturbation~\eqref{e:ham_mf1} of 
    a Lagrangian trajectory corresponding to a solution of the Navier-Stokes equation yields a mean-field system of the form~\eqref{e:sdeP}, and the mean field coincides with the original solution to the Navier-Stokes equation. (Theorem~\ref{thm:main})
    \item
    Consider the average over stochastic \revise{momenta} along a Gaussian perturbation~\eqref{e:ham_mf1} of a divergence-free time dependent vector field. 
    By Corollary~\ref{cor:main}, solutions to the Navier-Stokes equation can be characterized as a fixed point of this assignment. 
    \item
    There are the following analogies to the case of ideal fluids:
    \begin{itemize}
    \item
    The Hamiltonian structure implies conservation of circulation along stochastic paths in the phase space where one jointly considers configuration and \revise{momentum} variables. Since these variables cannot be separated in the formulation of the Kelvin Circulation Theorem, this does not imply conservation of circulation for the expectation of the stochastic \revise{momentum} -- i.e., for a solution of the Navier-Stokes equation.  
    \item
    The Navier-Stokes equation is derived from a (stochastic) Hamiltonian system without use of a stress tensor. 
    \end{itemize}
    \item
    Energy is neither conserved nor dissipating (Section~\ref{sec:non-diss}). However, the approach of \cite{H17} can be slightly modified to yield a system which is dissipative (Section~\ref{sec:diss}).
\end{itemize}
Finally, we have contrasted the model~\eqref{e:sdeP} with existing approaches from the literature and established several connections.
\revise{Our model should also be compared with \cite{Holm02a,HT12} where (deterministic) advection of fluid microstructure is studied. The \emph{Kinematic Sweeping Ansatz} of \cite{HT12} is to consider deterministic turbulence parameters which are swept along a mean flow, while \eqref{e:sdeP} is a model for stochastic particles along a mean flow. A detailed comparison of these approaches is a task for future research.}

\begin{comment}
\section{Statements}
\ethics{Not applicable.}
\dataccess{This paper has no data.}
\aucontribute{Full contribution by sole author.}
\competing{There are no competing interests.}
\funding{No funding was received for this work. }
\ack{I am grateful to Darryl Holm for useful remarks and explanations. 
The referee reports are also gratefully acknowledged, in particular for pointing out references \cite{Holm02a,HT12,Mem14,RMC17,RMC17a,RMC17b}.}
\disclaimer{Not applicable.}
\end{comment}

\end{document}